
\documentclass[12pt]{article}
\usepackage[margin=1in]{geometry}

\usepackage{amsthm}
\usepackage{amssymb}
\usepackage{amsmath}
\usepackage{graphicx}
\usepackage{wrapfig}
\usepackage{bbm}
\usepackage{float}
\usepackage{tikz}
\usetikzlibrary{arrows,calc,shapes,decorations.pathreplacing,patterns}
\usepackage{hyperref}
\hypersetup{pdftex,colorlinks=true,allcolors=blue}
\usepackage{hypcap}
\usepackage{caption}
\usepackage{pgfplots}
\usepackage{subcaption}
\usepackage{enumitem}

%

\def\addlegendimage{\csname pgfplots@addlegendimage\endcsname}

\newtheorem{theorem}{Theorem}
\newtheorem{lemma}[theorem]{Lemma}
\newtheorem{proposition}[theorem]{Proposition}
\newtheorem{corollary}[theorem]{Corollary}

\newtheorem{assumption}[theorem]{Assumption}

\newtheorem*{theorem*}{Theorem}
\newtheorem*{remark}{Remark}

\theoremstyle{definition}
\newtheorem{example}{Example}

\def\P{\mathop{\mathrm{P}}}
\def\E{\mathrm{E}}

\newcommand{\floor}[1]{\left\lfloor #1 \right\rfloor}
\newcommand{\ceil}[1]{\left\lceil #1 \right\rceil}

\begin{document}

\title{A strategic model of job arrivals to a single machine with earliness and tardiness penalties\footnote{To appear in The IISE Transactions}}

\author{Amihai Glazer\footnote{Department of Economics, University of California, Irvine, aglazer@uci.edu}, Refael Hassin\footnote{Department of Statistics and Operations Research, Tel Aviv University, hassin@post.tau.ac.il} and Liron Ravner\footnote{Department of Statistics and Operations Research, Tel Aviv University, lravner@post.tau.ac.il}}

\date{\today}
\maketitle

\begin{abstract}

We consider a game of decentralized timing of jobs to a single server (machine) with a penalty for deviation from a due date, and no delay costs. The jobs' sizes are homogeneous and deterministic. Each job belongs to a single decision maker, a customer, who aims to arrive at a time that minimizes his deviation penalty. If multiple customers arrive at the same time then their order of service is determined by a uniform random draw. We show that if the cost function has a weighted absolute deviation form then any Nash equilibrium is pure and symmetric, that is, all customers arrive together. Furthermore, we show that there exist multiple, in fact a continuum, of equilibrium arrival times, and provide necessary and sufficient conditions for the socially optimal arrival time to be an equilibrium. The base model is solved explicitly, but the prevalence of a pure symmetric equilibrium is shown to be robust to several relaxations of the assumptions: restricted server availability, inclusion of small waiting costs, stochastic job sizes, randomly sized population, heterogeneous due dates, and non-linear deviation penalties.
\end{abstract}

\section{Introduction}\label{sec:intro}

This paper complements the classical machine timing/sequencing problem with a common due date and earliness and tardiness penalties. Prior analyses focused on centralized analysis, that is, a single decision maker determining the schedule for all jobs. We extend that work by analyzing a decentralized setting in which each job is a rational agent; the centralized socially optimal schedule is a benchmark for comparison. 

More specifically, we study a single server system with customers sending their jobs to be processed in a single-machine shop according to a first-come first-served (FCFS) order. Jobs that arrive simultaneously are randomly ordered. Customers have desired due dates for processing their jobs; but the actual start time depends on their arrival times relative to the others, as dictated by the FCFS regime. Thus a game results where the players are the customers, the strategies are the arrival times to the system, and each player aims to minimize his expected earliness and tardiness penalties. Our model extends the vast literature on scheduling jobs subject to due dates, by assuming that jobs belong to individual agents who are rational and act strategically to maximize their utilities. 

We assume that the number of customers is finite and each has a single job. When some customers arrive together they are ordered according to a uniform random draw. If all customers have a common due date, or close due dates, then arriving together with other customers is potentially beneficial because in expectation every customer is in the ``middle," which may be closer to the due date than arriving before or after everyone else. If the job sizes of the customers are deterministic and the cost function is a weighted absolute deviation, i.e. some linear penalty for early service and a possibly different linear penalty for late service, we find that simultaneous arrivals (clustering) is the only possible equilibrium. The equilibrium arrival time is not unique, but rather there is a whole interval such that if all customers arrive at any time within that interval then no customer has an incentive to deviate. Moreover, the socially optimal arrival time often lies within this interval and is therefore an equilibrium. Specifically this holds if the earliness and tardiness penalties are not too different. Simple explicit necessary and sufficient conditions are given below.

A general framework for classical machine timing/sequencing problem with a single server and weighted penalties is presented and analysed in \cite{HP1991}. Surveys of the research on these problems can be found in \cite{BS1990} and \cite{HS2005}. Our work also relates to the research on decentralized multi-machine routing games (\cite{BH2007},\cite{FT2012} and \cite{AFJMS2015}), where the decision variable for the individual jobs is the choice of machine and not a timing decision.

Our base model assumes linear penalties, deterministic population and job sizes, no waiting costs, common due dates and unrestricted machine availability. It is thus  tractable and yields explicit results. The equilibrium solution of simultaneous arrivals, however, is valid for a more general decentralized job timing game. Specifically the due dates need not be common, but rather close enough, the cost function needs to be unimodal, there may be small waiting costs and the job sizes and population sizes may be random. We explore all of these possibilities. We further show how the equilibrium outcome is modified if the server is only available for a restricted period of time. We also present examples for which a symmetric pure-strategy equilibrium does not exist. Relaxing the assumptions of the base model is important to show that the results are robust for more realistic machine scheduling settings (heterogeneous due dates or deviation penalties and restricted availability), and for other applications such as queueing (stochastic population and job sizes) and transportation (waiting costs).

The next section reviews relevant literature. Section \ref{sec:model} defines the base model. This is followed by deriving the socially optimal solution in Section \ref{sec:social}, equilibrium analysis in Section \ref{sec:equilibrium} and sensitivity analysis in Section \ref{sec:sensitivity}.

\section{Literature}\label{sec:lit}

Our paper considers the time at which a strategic customer chooses to join a queue, with a first-come first-served order of service. A customer does not care how long he stays in the queue, but does care about the time at which he is served. Even if a customer does not care about when he is served, he may care about when he joins the queue because that affects his waiting time. An arrival-time game to a discrete stochastic queue first appeared in \cite{GH1983}, which introduced the ?/M/1 model: each customer chooses when to arrive at a single-server queue that starts operating at some known time, aiming to minimize his minimizing waiting costs. A related problem concerns the concert queuing game, where customers seek service at a first-come first-served queueing system that opens at a given time, with a customer's utility depending on when he gets service and how long he waits in the queue \cite{JS2013}. For a server with no defined starting time or ending time, customers who care only about the length of time they are in the queue will spread themselves out as much as possible \cite{LvM2004}. This model has been extended in many directions: batch service \cite{GH1987}, loss systems \cite{MC2006,HR2015}, no early arrivals \cite{HK2011}, and a network of queues \cite{HJ2015}. We find that  customers who instead care about the time at which they are served will concentrate their arrival times. 

Queueing models which consider the timing of arrivals have incorporated tardiness penalties, \cite{JS2013,H2013}, order penalties \cite{R2014}, and earliness penalties \cite{SK2016}. In \cite{RHV2016} a deterministic processor sharing system with heterogeneous due dates is considered, and the pure-strategy equilibrium arrival times are derived, explicitly for some examples, and algorithmically for the general case. The latter model resembles ours in having customers who arrive together ``helping" each achieve his due date, however arrivals do not cluster. The arrival game to a \textit{last-come-first-served} system is studied by \cite{PO2012}, who experimentally compare  several service-order policies provided in \cite{BSO2016}. A key feature in all of the above is that the equilibrium solution is given by a mixed strategy, that is, customers randomize their arrival times. Furthermore, the arrival distribution has no atoms (except for boundary cases), because arriving together with other customers is never desirable.

Work in transportation (which rarely cites the work on queues) considers a bottleneck, such as a bridge, with a commuter's costs increasing with the time he spends on the road, and incurring a cost if he arrives at his destination too early or too late (\cite{V1969} and \cite{ADL1993}). The choice of departure time in transportation is analyzed in \cite{AKN2015}, who introduce a model with masses of customers departing together. They assume a fluid population with the Greenshield's congestion dynamics (e.g. \cite{MH1984}), which are, at least from a technical perspective, close to the dynamics of processor sharing queue, in the sense that all customers travelling together are slowed as congestion increases. 

Work in economics has, like us, addressed issues of timing, but unlike us does not consider customers served in the order in which they arrive. A concern in this literature is whether agents will cluster, all taking action at the same time; another concern has been whether an agent benefits from acting before or instead after others do. Some authors show how the first entrant into an industry can earn larger profits than later entrants, a result called the pioneer's advantage \cite{RF1985}, the first-mover advantage \cite{LM1988}, and order of market entry effect \cite{L1988}. Arriving last is desirable in the advertisement board timeline game of \cite{AS2016d}, where the last advertisement posted is the most visible. 

Researchers have also considered payoffs which depend on the order of arrivals. Delay may be costly, with each player preferring that others act before him. \cite{S1974} formalized such a situation, where two animals fight over a fallen prey, with the first to give up losing, and with fighting costly for both.  The situation may be reversed,with the passage of time exogenously beneficial, and players wishing to pre-empt others. In the ``grab-the-dollar" game a player can either grab the money on the table or wait for one more period, with the pot increasing over time. Each player wants to be the first to take the money, but would rather grab a larger pot. The idea has been applied to firms' decisions about when to adopt a new technology \cite{FT1985}, and about when to enter a market \cite{G1985}. When a firm should make an irreversible investment, where the return per period decreases with the number of firms that have invested, and with the cost of investment decreasing over time, is modelled by \cite{ASD2014}. A result is that the firms may cluster, even in the absence of coordination failures, informational spillovers, or positive payoff externalities. That result resembles ours, but in a very different context. For clustering arising from coordination failures, see \cite{LP2003}. For effects of positive network externalities see \cite{MW2010}, and for informational spillovers see \cite{CG1994}, and \cite{BM2010}. 

Or players may prefer to be neither first nor last; that general situation where rewards depend on the players' ordinal rank of timing action is modeled by \cite{PS2008} and by \cite{APS2014}. We too have order matter, but not by assumption but because customers have preferences over when they are served, and the order in which a customer arrives affects when he is served.
%

\section{Model}\label{sec:model}

Consider $n$ customers, each with a single job to be processed by a single server. Serving a single job takes one unit of time. The ideal service \textit{start time} for customer $i=1,\ldots,n$, is $d_i$ (due date). If customer $i$ commences service at time $s$ then his earliness time is
\[
E_i(s)=(d_i-s)^{+}=\max\{0,d_i-s\}.
\]
His tardiness time is
\[
T_i(s)=(s-d_i)^{+}=\max\{0,s-d_i\}\ .
\]
The deviation cost incurred is
\begin{equation}\label{eq:d_cost}
DC_i(s)=\gamma E_i(s)+ \beta T_i(s), \quad i=1,\ldots,n\ ,
\end{equation} 
where $\beta$ is the marginal lateness cost and $\gamma$ is the marginal earliness cost. In the case of symmetric deviation penalties, $\gamma=\beta$, the cost function is simply the absolute deviation from the due date multiplied by a constant.

In the classical scheduling setting a central decision maker selects a sequence of service start times $(s_1,\ldots,s_n)$ that minimizes the total weighted deviation cost,
\[
\sum_{i=1}^n DC_i(s_i)=\gamma\sum_{i\in\mathcal{E}}|s_i|+\beta\sum_{i\in\mathcal{T}}s_i\ ,
\]
where $\mathcal{E}=\{i:\ E_i>0\}$ is the set of early jobs and $\mathcal{T}=\{i:\ T_i\geq 0\}$ is the set of tardy jobs. The solution of the centralized problem will be detailed in Section \ref{sec:social} and will serve as a benchmark for the following decentralized analysis.

In the decentralized setting each customer decides when to send his job to the system. We adopt the standard definition of a Nash equilibrium. That requires two conditions. First, each customer maximizes his expected utility, given his beliefs about what other customers will do, by choosing when to send his job. Second, each customer's beliefs are rational -- he does not believe that some customer will send a job at a time which that customer would avoid. Such an equilibrium can arise in at least two plausible ways. First, each customer can calculate, as we do, what are equilibrium times, and behave accordingly. Second, each customer may observe when others sent jobs in the past, or when customer sent jobs at other similar facilities, and so, having such information, decide when is the best time for himself to send a job. An equilibrium will then be self-sustaining---the times of submission are described by the equilibrium (that is what customers observed in the past or at other facilities), and each customer has an incentive to send a job at a time described by the equilibrium. Moreover, as is standard in analyses of Nash equilibria, a customer's behavior may be described by a probability that he will send a job at any stated time; customers then have beliefs about the probabilities adopted by other customers. Formally, an action of customer $i$ is an arrival time $t_i\in\mathcal{A}$, where $\mathcal{A}$ denotes the interval of time that the server is available. A mixed strategy is a probability distribution of arrival times, i.e., a random variable with a cdf $F_i$ such that $F_i(\mathcal{A})=1$.

The server admits jobs according to a FCFS regime. Multiple jobs that arrive at the same instant are admitted into service in uniform random order. Hence, for a given arrival profile $\mathbf{t}=(t_1,\ldots,t_n)$ the effective time of admittance into service depends on the arrival of others and on the lottery for order for jobs arriving together in clusters. To demonstrate the system dynamics, consider an ordered profile of arrivals: $t_1\leq t_2\leq\cdots\leq t_n$, where the order already takes into account the lottery for simultaneous arrivals. The first job is admitted immediately:
\[
s_1=t_1\ .
\] 
If the server is idle the subsequent jobs are admitted when they arrive. If the server is busy they are admitted immediately after the previous job is completed : 
\[
s_j=\max\{s_{j-1}+1,t_j\}\ 1<j\leq n \ .
\]

Let $S_i(\mathbf{t})$ denote the random variable of the start time for job $i$ given the arrival profile $\mathbf{t}$. The expected deviation cost \eqref{eq:d_cost} of customer $i$ is
\begin{equation}\label{eq:cost}
c_i(\mathbf{t})=\E\left[DC_i(S_i(\mathbf{t}))\right] \ , 
\end{equation}
where the expectation accounts for both the lottery for order among simultaneous arrivals and the randomization of arrival times when some customers used mixed strategies.

To summarize, the dynamics of the non-cooperative game are as follows:
\begin{enumerate}[label=(\alph*)]
\item Customers simultaneously decide when to send their jobs.
\item The server starts working when the first job arrives.
\item Jobs arriving at a busy server form a FCFS queue.
\item Jobs arriving simultaneously are randomly ordered at the end of the queue.
\item Customers pay a penalty for deviation of their service start time from the due date.
\end{enumerate}

\begin{remark}
We assume that the due date is for the service start time, as is common in queueing models (e.g. \cite{H2013}), and not completion time, as is common in the scheduling literature. When service times are deterministic this clearly does not affect the outcome in both the centralized and decentralized settings. When service times are stochastic the distinction needs to be treated with more care.
\end{remark}

\textbf{Base model assumptions}:
\begin{enumerate}
\item Homogeneous due date, $d_i=0$, $\forall i=1,\ldots,n$.
\item Unrestricted server availability, $\mathcal{A}=(-\infty,\infty)$.
\end{enumerate}

These assumptions are appropriate for a system that is available for a long time. In standard scheduling models the server becomes available at $t=0$ and the the due date is some positive time $d>0$. If the server is available for long enough we can normalize the due date to zero, thereby simplifying the presentation of the analysis. Section \ref{sec:sensitivity} examines the implications of these assumptions, among other extensions, and shows that the equilibrium results of the base model still hold in many cases. In particular, Section \ref{sec:restricted} considers availability restrictions to the system and Section \ref{sec:heterogeneous} examines the consequences of heterogeneous due dates.

\section{Social optimization}\label{sec:social}
A central planner seeking to minimize the total cost,
\[
TC(\mathbf{t})=\sum_{i=1}^n c_i(\mathbf{t})\ ,
\]
can obtain any possible sequence of service start times by setting arrival times at a distance of at least $1$ from each other. Therefore the social optimization problem is the classical sequencing problem,
\begin{equation*}\label{eq:social_opt}
\begin{split}
\min_{(s_1,\ldots,s_n)}\sum_{i=1}^n[\gamma E_i(s_i)+\beta T_i(s_i)]\ ,\\
\mathrm{s.t.} \quad s_i\geq s_{i-1}+1, \quad i=2,\ldots,n\ .
\end{split}
\end{equation*}

Note, however, that the identical sequence of start times can be achieved by several \textit{arrival profiles}, for example two customers arriving at the same instant will yield the same sequence as that of one arriving exactly a unit after the other.

We next show that with a common due-date the optimal service sequence is any sequence with no idle time such that $0$ is the $\frac{\beta}{\beta+\gamma}$ percentile of the sequence. In other words the proportion of early jobs is $\frac{\beta}{\beta+\gamma}$. This result arises because the objective function is the sum of absolute deviations from zero with different weights for negative and positive values. Proposition \ref{prop:homogeneous_opt} is a standard result in single machine sequencing (e.g. \cite{BS1990}). Nevertheless, we provide the statement and proof for our specific formulation for  completeness and for easy comparison to the subsequent equilibrium analysis. 

\begin{proposition}\label{prop:homogeneous_opt}
Let $\tilde{s}:=\frac{n\beta}{\beta+\gamma}$. A sequence of service start times is optimal if and only if
\[
s_i=s_{i-1}+1,\quad i=2,\ldots,n\ ,
\]
and 
\begin{enumerate}
\item if $\tilde{s}\in\mathbbm{N}$ then
\[
s_1\in \left[-\tilde{s},-\tilde{s}+1\right]\ ,
\]
\item if $\tilde{s}\notin\mathbbm{N}$ then
\[
s_1=-\floor{\tilde{s}}\ .
\]
\end{enumerate} 
\end{proposition}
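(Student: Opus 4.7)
The plan is to split the argument into two stages: first, I would reduce to tight schedules satisfying $s_i = s_{i-1} + 1$ for every $i \geq 2$; second, I would minimize the resulting single-variable piecewise linear convex function in $s_1$.

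For the first stage, I would observe that the total cost is convex in $(s_1, \ldots, s_n)$ as a sum of convex piecewise linear terms, and the feasible region is a polyhedron. Among all optimal solutions, pick one that minimizes $s_n$ (equivalently, one whose total idle time is smallest). If any gap $s_i > s_{i-1} + 1$ remained, then rigidly shifting the tail $(s_i, \ldots, s_n)$ leftward by $x \in [0, s_i - s_{i-1} - 1]$ keeps the schedule feasible, and by optimality of $x = 0$ the cost along this segment is at least its value at $0$. Convexity then forces the set of minimizers of this one-parameter problem to be a closed subinterval containing $0$; taking the largest such $x$ strictly reduces $s_n$ without increasing cost, contradicting our choice. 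Hence $s_i = s_{i-1} + 1$ for every $i$.

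For the second stage, I would substitute $s_i = s_1 + i - 1$ to obtain $f(s_1) = \sum_{i=1}^n \bigl[\gamma(-(s_1 + i - 1))^+ + \beta(s_1 + i - 1)^+\bigr]$, which is piecewise linear and convex with breakpoints at $s_1 \in \{0, -1, \ldots, -(n-1)\}$. On the open interval $(-k, -k+1)$ for an integer $k \in \{0, \ldots, n\}$, exactly the first $k$ jobs are strictly early and the remaining $n - k$ strictly late, so the slope of $f$ equals $\beta n - (\beta + \gamma) k$; this vanishes precisely at $k = \tilde{s}$. When $\tilde{s} \notin \mathbbm{N}$ the slope switches from negative to positive as $s_1$ crosses $-\lfloor \tilde{s} \rfloor$, giving the unique minimizer $s_1 = -\lfloor \tilde{s} \rfloor$. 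When $\tilde{s} \in \mathbbm{N}$ the slope is identically zero on $(-\tilde{s}, -\tilde{s} + 1)$, so $f$ is constant on $[-\tilde{s}, -\tilde{s} + 1]$ and strictly monotone on either side, yielding the stated interval of minimizers.

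The hard part will be the no-idle-time reduction, specifically handling jobs that land exactly at the due date $0$ after a shift, where the deviation penalty is non-differentiable. Passing through the convexity and piecewise linearity of the restricted cost as a function of the shift variable, rather than manipulating one-sided derivatives job by job, is the cleanest way to sidestep the case analysis on which jobs straddle zero; the second stage is then a routine piecewise linear optimization.
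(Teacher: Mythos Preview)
Your second stage---optimizing the piecewise linear convex function $f(s_1)$---is correct and matches the paper's argument essentially line for line. The first stage, however, has a real gap. After shifting only the tail $(s_i,\ldots,s_n)$ leftward, you need the set of minimizers of the one-parameter cost on $[0,\,s_i-s_{i-1}-1]$ to extend strictly beyond $0$; otherwise ``the largest such $x$'' is $0$ and no contradiction with minimality of $s_n$ results. But convexity together with optimality at $x=0$ only guarantees that the minimizer set is a closed interval \emph{containing} $0$; it may well be the singleton $\{0\}$. Concretely, the right derivative of the shifted tail cost at $x=0^+$ equals $\gamma\cdot|\{j\ge i:s_j\le 0\}|-\beta\cdot|\{j\ge i:s_j>0\}|$, and whenever this is strictly positive (for instance whenever $s_i\le 0$, so that the tail contains early jobs) the cost strictly increases under any left shift, the minimizer set is $\{0\}$, and your argument stalls.

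The paper's cure is a two-sided case split rather than a one-sided shift: if $s_i>0$ the tail is entirely late and shifting it left yields a \emph{strict} cost decrease, contradicting optimality outright; if $s_i\le 0$ then $s_{i-1}<s_i-1<0$, so the head $(s_1,\ldots,s_{i-1})$ is entirely early, and shifting the head \emph{right} yields a strict decrease. Either branch shows that any schedule with idle time is strictly suboptimal. This also repairs a secondary defect of your approach: even if the tail-shift argument worked, it would only show that the particular minimum-$s_n$ optimum is tight, not that every optimum is tight, which is what the ``only if'' direction of the proposition demands.
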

\begin{proof}
Without loss of generality let $s_1\leq s_2\leq \cdots\leq s_n$. Because $d_i=0$ for all $i=1,\ldots,n$, then \eqref{eq:cost} yields
\[
c_i(s)=DC_i(s)=|s|(\beta\mathbf{1}_{\{s> 0\}}+\gamma\mathbf{1}_{\{s< 0\}})\ ,
\]
where $\mathbf{1}_A$ is the indicator function on condition $A$. An optimal service sequence has no idle time between jobs; that is, the sequence satisfies 
\[
s_i=s_{i-1}+1=s_1+i-1,\ \forall i\geq 2\ .
\]
This result holds because if the server were idle at some time, for instance after job $i$ such that $s_i<0$, then the earliness penalty of all jobs up to $i$ can be reduced by shifting them forward until there is no idle time. This shift does affect the start time of subsequent jobs. The reverse argument can be used for $s_i>0$.

Hence there is a single decision variable: $s_1$. Denoting $i_0:=\max\{i:s_i\leq 0\}$, the total cost function is
\[
\begin{split}
TC(s_1) &= \sum_{i=i_0+1}^{n}\beta(s_1+(i-1))-\sum_{i=1}^{i_0}\gamma(s_{1}+(i-1)) \\
&= \big(\beta(n-i_0)-\gamma i_0\big)s_1+\sum_{i=i_0+1}^{n}\beta(i-1)-\sum_{i=1}^{i_0}\gamma(i-1) \ .
\end{split}
\]
Note that $i_0$ depends on the choice of $s_1$. Therefore, the cost function is piecewise affine with respect to $s_{1}$, with the slope determined by $i_0$: $\beta(n-i_0)-\gamma i_0$. The slope is negative if $i_0>n\frac{\beta}{\beta+\gamma}$, positive if $i_0<n\frac{\beta}{\beta+\gamma}$, and zero if $i_0=n\frac{\beta}{\beta+\gamma}$ (in case this is indeed an integer). Therefore, if $\frac{n\beta}{\beta+\gamma}\in\mathbbm{N}$ then any $s_1$ satisfying 
\[
i_0=\max\{i:s_1+i-1\leq 0\}=\frac{n\beta}{\beta+\gamma}\ ,
\]
or equivalently,
\[
-\frac{n\beta}{\beta+\gamma}\leq s_1\leq -\frac{n\beta}{\beta+\gamma}+1 \ ,
\]
is optimal. Otherwise, if $\frac{n\beta}{\beta+\gamma}\notin\mathbbm{N}$ then the objective function has a global optimum that satisfies
\[
\frac{n\beta}{\beta+\gamma}<i_0=\max\{i:s_1+i-1\leq 0\}<\frac{n\beta}{\beta+\gamma}+1\  ,
\]
and $s_1+i_0-1=0$. Hence, we conclude that $i_0=\ceil{\frac{n\beta}{\beta+\gamma}}$, and lastly, that
\[
s_1=-\ceil{\frac{n\beta}{\beta+\gamma}}+1=-\floor{\frac{n\beta}{\beta+\gamma}}-1+1=\floor{\frac{n\beta}{\beta+\gamma}}\ .
\]
\end{proof}
\begin{corollary}
If $\beta=\gamma$ then a sequence of service start times $s_i=s_1+i-1$ is optimal if and only if zero is a median of the sequence.
\end{corollary}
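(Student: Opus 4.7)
The plan is to derive the corollary as a direct specialization of Proposition \ref{prop:homogeneous_opt} to the case $\beta=\gamma$, and then reconcile the two characterizations of the optimal sequence (one in terms of $\tilde{s}=n\beta/(\beta+\gamma)$, the other in terms of being a median of the sequence $\{s_1,\ldots,s_n\}=\{s_1,s_1+1,\ldots,s_1+n-1\}$).

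First I would substitute $\beta=\gamma$, which gives $\tilde{s}=n/2$. Then I would split into the two cases already distinguished in the proposition: $n$ even and $n$ odd. If $n$ is even then $\tilde{s}\in\mathbbm{N}$ and by Proposition \ref{prop:homogeneous_opt}(1) the sequence is optimal iff $s_1\in[-n/2,-n/2+1]$; equivalently, $s_{n/2}=s_1+n/2-1\le 0\le s_1+n/2=s_{n/2+1}$, which is exactly the standard definition of zero being a median of an even-length ordered sequence. If $n$ is odd, $\tilde{s}\notin\mathbbm{N}$, and Proposition \ref{prop:homogeneous_opt}(2) gives $s_1=-\lfloor n/2\rfloor=-(n-1)/2$; then the middle term $s_{(n+1)/2}=s_1+(n-1)/2=0$, so zero is the (unique) median of the sequence.

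Conversely, both directions need the same check: if zero is a median in the appropriate sense, then $s_1$ falls into the range prescribed by the proposition, hence the sequence is optimal. Since the proposition has already established optimality as a biconditional with the location of $s_1$, this equivalence is immediate once one writes out what ``zero is a median'' means for $\{s_1,s_1+1,\ldots,s_1+n-1\}$. The only mild subtlety is handling the even/odd dichotomy and being explicit that in the even case any $s_1$ in the closed interval $[-n/2,-n/2+1]$ works, which aligns with the usual convention that any value in the interval between the two central order statistics is a median.

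There is no real obstacle here; the proof is essentially a one-line specialization plus a routine translation of ``median'' into a statement about $s_1$. I would therefore keep the write-up brief, perhaps a couple of lines, since everything nontrivial has been done in the proof of Proposition \ref{prop:homogeneous_opt}.
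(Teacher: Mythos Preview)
Your proposal is correct and matches the paper's treatment: the paper states the corollary immediately after Proposition~\ref{prop:homogeneous_opt} with no separate proof, treating it as the obvious specialization $\tilde{s}=n/2$, and your even/odd case split is exactly the routine verification one would supply if asked to spell it out.
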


\section{Equilibrium analysis}\label{sec:equilibrium}

An arrival profile $\mathbf{t}=(t_1,\ldots,t_n)$ is a pure-strategy Nash equilibrium if
\[
c_i(\mathbf{t})\leq c_i(t_1,\ldots,t_i^{'},\ldots,t_n),\quad \forall t_i^{'}\in\mathcal{A},\ i=1,\ldots,n\ .
\]
This means that the arrival profile is stable in the sense that no single customer can reduce his expected deviation penalty by sending his job at a different time.

\subsection{Two-customer game}\label{sec:eq_n2}

Consider first a two-customer example with symmetric deviation penalties: $n=2$ and $\beta=\gamma$. In this case $t_1=t_2=-\frac{1}{2}$ is the unique pure strategy equilibrium with both customers incurring a cost of $\frac{\beta}{2}$. No customer has an incentive to deviate because arriving during $\big(-\frac{1}{2},\frac{1}{2}\big]$ will cost exactly $\frac{\beta}{2}$; arriving later or earlier than this interval is clearly more costly. The uniqueness can be derived by considering the best-response function to any arrival $t$ by the other customer (we use the notation $t-$ to indicate that arriving momentarily before the other customer is optimal):
\[
b(t)=\left\lbrace\begin{array}{ll}
0 \mbox{,}& t>0, \\
t- \mbox{,}& -\frac{1}{2}< t\leq 0, \\
\left[-\frac{1}{2},\frac{1}{2}\right] \mbox{,}& t=-\frac{1}{2}, \\
\left(t,1+t\right] \mbox{,}& -1\leq t< -\frac{1}{2} \\
0 \mbox{,}& t< -1.
\end{array}\right.
\]
The best-response is not necessarily unique. The pair of arrival times $(t_1,t_2)$ is an equilibrium if $t_1\in b(t_2)$ and $t_2\in b(t_1)$. Indeed, $t_1=t_2=-\frac{1}{2}$ is the only pair that satisfies this condition. This is illustrated for symmetric strategies in Figure \ref{fig:N2_BR}. We conclude that the only stable pair is $-\frac{1}{2}$ for both customers. Furthermore, the service start times in equilibrium are socially optimal. 

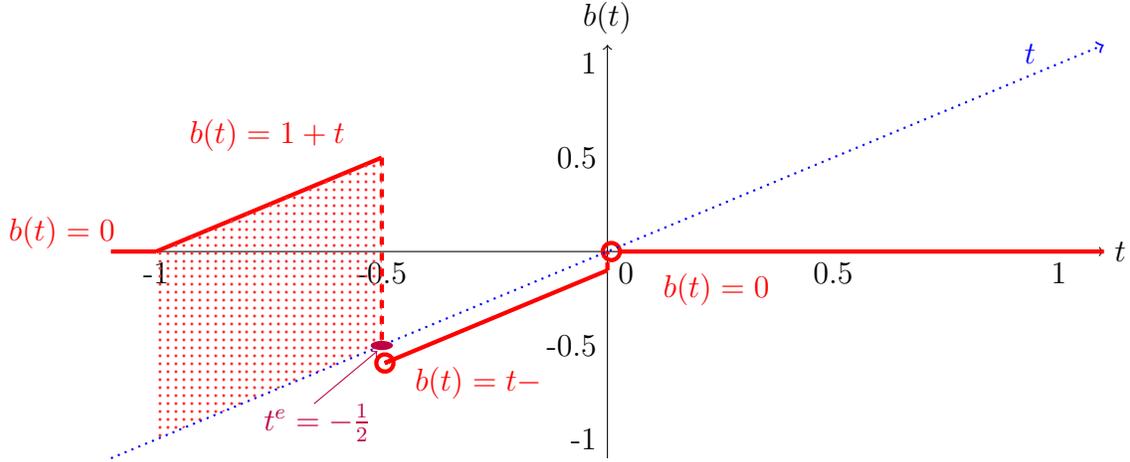
\begin{figure}[H]
\centering
\begin{tikzpicture}[xscale=6,yscale=2.5]
  \def\xmin{-1.1}
  \def\xmax{1.1}
  \def\ymin{-1.1}
  \def\ymax{1.1}
    \draw[->] (\xmin,0) -- (\xmax,0) node[right] {$t$};
    \draw[->] (0,\ymin) -- (0,\ymax) node[above] {$b(t)$};
    \foreach \x in {-1,-0.5,0.5,1}
    \node at (\x,0) [below] {\x};
    \draw[] (0,0) node[below right] {0};
    \foreach \y in {-1,-0.5,0.5,1}
    \node at (0,\y) [left] {\y};

	\draw[smooth,red, ultra thick]  (\xmin,0) -- (-1,0);	
    \draw[-,red,domain=-1:-0.5,ultra thick]  plot (\x, {1+\x});
    \draw[dashed,red,ultra thick]  (-0.5,0.5) -- (-0.5,-0.5);  
    \draw[o-,red,domain=-0.515:0,ultra thick]  plot (\x, {\x-0.1});
    \draw[dashed,red, ultra thick]  (0,-0.1) -- (0,0);
    \draw[o-,smooth,red, ultra thick]  (-0.015,0) -- (\xmax,0);     
    
    \draw[] (-1.35,0.1) node[right,red] {$b(t)=0$};
    \draw[] (-0.95,0.62) node[right,red] {$b(t)=1+t$};
    \draw[] (-0.45,-0.7) node[right,red] {$b(t)=t-$};
    \draw[] (0.1,-0.2) node[right,red] {$b(t)=0$};
    
    \fill[pattern color=red!80, opacity=1,pattern=dots] (-1,0)--	(-0.5,0.5)-- (-0.5,-0.5)-- (-1,-1);
    
    \draw[->,blue,dotted, thick,domain=\xmin:\xmax]  plot (\x, {\x});
	\draw[] (0.9,1.05) node[right,blue] {$t$};
	
	\fill[purple] (-0.5,-0.5) circle[radius=0.7pt];
	\draw[] (-0.5,-0.91) node[left,purple] {$t^e=-\frac{1}{2}$};
	\draw[->,purple,smooth] (-0.65,-0.81) -- (-0.51,-0.53);	
\end{tikzpicture}
\caption{Example: $n=2$ and $\beta=\gamma$. Best response to the arrival time $t$ of the second customer. The only fixed point is the discontinuity point at $t=-\frac{1}{2}$. The solid red line and the dotted region is the best response function $b(t)$.}\label{fig:N2_BR}
\end{figure}

\subsection{General equilibrium properties}\label{sec:eq_char}

Consider next the equilibrium for any finite number of homogeneous customers and any $\beta,\gamma>0$. It turns out that the set of pure-strategy equilibria includes only symmetric equilibria, and is given by an interval of arrival times $\tau=[\underline{t},\overline{t}]$. That is, for any $t\in\tau$ all customers simultaneously arriving at time $t$ is an equilibrium. Before stating the main result we prove several useful lemmas. All proofs are given in the Appendix. 

\begin{lemma}\label{lemma:n_equi}
Any equilibrium arrival profile satisfies the following:
\begin{enumerate}
\item[(a)] The first customer arrives at some $t_a<0$ and the last customer enters service at some $t_b>0$.
\item[(b)] The server operates continuously during the interval $[t_a,t_b]$.
\end{enumerate}
\end{lemma}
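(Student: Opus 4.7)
My plan is to prove both parts by exhibiting a profitable unilateral deviation whenever the asserted structure fails. I prove (b) first, since the packed-block structure $s_i = t_a + i - 1$ that it supplies makes (a) essentially immediate. Throughout I order the realized service slots as $s_1 \le s_2 \le \cdots \le s_n$ determined by $\mathbf{t}$; these values are a deterministic function of $\mathbf{t}$, even though the identity of the customer occupying each slot is randomized inside any cluster of simultaneous arrivals.

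For (b), suppose for contradiction that there is an idle gap $s_{j+1} > s_j + 1$ for some $j$. The server being idle just before slot $j+1$ forces its customer to have arrived exactly at $s_{j+1}$, so slots $j$ and $j+1$ live in different clusters. I split on the sign of $s_{j+1}$. If $s_{j+1} > 0$, the slot-$(j+1)$ customer deviates to $t' = \max\{s_j + 1 + \epsilon,\ 0\}$, which lies strictly inside the idle interval; because customers originally in slots $j+2,\ldots,n$ arrive at times $\ge s_{j+2} > s_{j+1}$, the server is still idle at $t'$ in the rescheduled queue, his new start equals $t'$, and his tardiness strictly drops. If $s_{j+1} \le 0$, so $s_j < -1$, the slot-$j$ customer deviates to $t' = s_{j+1} - \epsilon$; the same bookkeeping (earlier slots finish by $s_j$, and the next arrival is at $s_{j+1} > t'$) shows the server is idle at $t'$, his new start equals $t'$, and his earliness cost $\gamma\,|s_{j+1} - \epsilon|$ strictly undercuts $\gamma\,|s_j|$. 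Either case contradicts equilibrium.

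Given (b), we have $s_i = t_a + i - 1$ and $t_b = t_a + n - 1$. For (a), if $t_a \ge 0$ the mean start is $t_a + (n-1)/2 \ge (n-1)/2 > 0$, so some customer has expected cost at least $\beta(n-1)/2$; deviating to arrive at $t_a - \delta$ puts him alone at the head with cost $\max\{\beta t_a,\ \gamma\delta\}$, strictly smaller for $\delta$ sufficiently small. If $t_b \le 0$ then $t_a \le -(n-1)$, and any customer in the first cluster has expected cost at least $\gamma(n-1)/2 > 0$; removing such a customer opens a one-unit idle window in the rescheduled queue — at the right end of his cluster (if its size is $\ge 2$), or just before the next arriving customer (if its cluster has size $1$) — into which he re-enters at a point $t'$ whose $|t'|$ is strictly less than his original expected-start magnitude, lowering his earliness cost. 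Both sub-cases contradict equilibrium, giving $t_a < 0 < t_b$.

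The main obstacle is the rescheduling bookkeeping: after a deviator withdraws his original arrival and re-enters at the chosen time $t'$, I must verify that the server is genuinely idle at $t'$ in the rescheduled queue, so that his realized start equals $t'$ and the cost comparison is immediate. In every case this reduces to comparing $t'$ to the (unchanged) arrivals of the non-deviators and to the completion times of the earlier slots, which can only shift earlier when one customer is removed; the packed structure from (b) then makes these comparisons clean.
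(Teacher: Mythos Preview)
Your deviation-based argument is sound in spirit and parallels the paper's, but you have made it more elaborate than necessary and introduced two slips in part~(a). The paper handles (a) and (b) independently in a couple of lines each: for (a) it simply observes that if the claim fails then some customer has positive expected cost, while a unilateral deviation to time $0$ (or to $-\epsilon$) yields cost zero or arbitrarily close to it; for (b) it notes that an idle interval lying before $0$ (respectively after $0$) lets the customer at its left (respectively right) end slide across it and strictly reduce earliness (respectively tardiness). You instead prove (b) first and then lean on the packed structure for (a), which is legitimate but gains nothing.

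The two slips: in the $t_a\ge 0$ sub-case your lower bound on the deviator's original cost should be $\beta\bigl(t_a+(n-1)/2\bigr)$, not $\beta(n-1)/2$; with only the weaker bound, the comparison against the deviation cost $\max\{\beta t_a,\gamma\delta\}$ fails whenever $t_a>(n-1)/2$. In the $t_b\le 0$ sub-case, your claim that removing a customer from the first cluster opens a one-unit idle window ``at the right end of his cluster'' is false in general: if the next cluster's arrival time is $\le t_a+k-1$, that slot is immediately filled and the vacated unit migrates further right, possibly all the way past the last customer. Both issues evaporate if you adopt the paper's simpler deviation and just send the unhappy customer to time $0$.
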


\begin{lemma}\label{lemma:no_asymmetric}
There is no asymmetric pure or mixed strategy equilibrium.
\end{lemma}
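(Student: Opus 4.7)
My plan is to proceed by contradiction in two stages, treating the pure and mixed asymmetric profiles in turn. Assume first that a pure profile $\mathbf{t}$ with not all arrival times equal is a Nash equilibrium. Order the arrivals and let $t^{(1)}<\cdots<t^{(L)}$ denote the distinct arrival times with cluster sizes $m_1,\ldots,m_L$ and $L\geq 2$. By Lemma~\ref{lemma:n_equi} we have $t^{(1)}<0<t^{(1)}+n-1$ and continuous service, so $t^{(l+1)}\leq t^{(1)}+\sum_{l'\leq l} m_{l'}$. Define the slot costs $a_k:=DC(t^{(1)}+k)$ for $k=0,\ldots,n-1$. The kink of $DC$ at $0$ makes the sequence $(a_k)$ strictly V-shaped: strictly decreasing on slots with $t^{(1)}+k<0$ and strictly increasing on slots with $t^{(1)}+k>0$. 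A cluster-$l$ member incurs expected cost $\bar{DC}_l=\frac{1}{m_l}\sum_{k\in I_l}a_k$ over its slot set $I_l$.

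I would split on the sign of $t^{(1)}+m_1-1$. When $t^{(1)}+m_1-1\leq 0$ (Case A), the slots of cluster $1$ all lie in the strictly-decreasing branch of $(a_k)$; for $m_1\geq 2$ a cluster-$1$ member deviates to $t'=t^{(1)}+m_1-1-\epsilon$, lands alone at slot $m_1-1$ with cost $a_{m_1-1}<\bar{DC}_1$, contradicting equilibrium, and for $m_1=1$ a small rightward shift within $(t^{(1)},t^{(2)})$ (to $0$ if $t^{(2)}>0$, else to $t^{(2)}-\epsilon$) yields cost strictly below $a_0=\bar{DC}_1$ by the same monotonicity. When $t^{(1)}+m_1-1>0$ (Case B), every slot $k\geq m_1$ is strictly positive so each non-leftmost cluster lies in the increasing branch; the symmetric argument on the rightmost cluster $L$ with $m_L\geq 2$ has a member move to $t^{(L)}-\epsilon$ for cost $a_{n-m_L}<\bar{DC}_L$. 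The delicate subcase is $m_L=1$: for $L\geq 3$ the singleton benefits strictly by merging into cluster $L-1$ (which lies entirely in the increasing region, so $\bar{DC}_{L-1}<a_{n-1}$ and the enlarged average stays below $a_{n-1}$), while for $L=2$ I split further by whether $\bar{DC}_1<a_{n-1}$ or $\bar{DC}_1>a_{n-1}$. In the first case the singleton benefits by merging into cluster $1$; in the second a cluster-$1$ member benefits via the gap deviation to $t'=t^{(1)}+m_1-1+\epsilon$, achieving cost $a_{n-2}$, strictly below $\bar{DC}_1$ since $a_{n-2}<a_{n-1}<\bar{DC}_1$.

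For the mixed case, the plan is to collapse every support to a single pure arrival time via the Nash support property: every $t\in\operatorname{supp}(F_i)$ is a best response yielding the common expected cost $c_i^*$. Set $\tau_-:=\inf_i\inf\operatorname{supp}(F_i)$; if some customer $i$ has $\inf\operatorname{supp}(F_i)=\tau_-$ strictly below $\min_{j\neq i}\inf\operatorname{supp}(F_j)$, then on a right-neighborhood of $\tau_-$ customer $i$ is almost surely first in queue, so the cost there reduces to $DC(\cdot)$, which is strictly decreasing on $(-\infty,0)$ by the V-shape argument; a small rightward shift of $\tau_-$ then strictly decreases $c_i^*$, contradicting equilibrium. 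Hence all infima coincide, and symmetrically all suprema coincide. An analogous monotonicity-of-conditional-cost argument on interior subintervals, combined with the strict V-shape of $DC$ and the random-order tie-breaking among simultaneous arrivals, forces every support to collapse to a single common pure arrival time, contradicting asymmetry.

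The principal obstacle is the pure-case edge subcase $L=2$, $m_L=1$ with cluster $1$ straddling $0$, where the direction of the profitable deviation hinges on the comparison between $\bar{DC}_1$ and $a_{n-1}$ and depends on the ratio $\beta/\gamma$; a clean verification in both regimes needs direct use of the V-shape of $(a_k)$ together with the convexity of $DC$. In the mixed case, the additional challenge is extending the monotonicity argument beyond the support endpoints to handle atoms or continuous mass in the interior of the distributions, for which the constant-cost-on-support property must be combined carefully with the tie-breaking structure among simultaneous arrivals.
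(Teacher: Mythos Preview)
Your pure-strategy argument is essentially the paper's argument, but carried out with more care. The paper simply takes the first two clusters $x<y$ with sizes $n_x,n_y$ and says: if $x+n_x-1<0$ an $x$-customer profits by moving to $x+\epsilon$; otherwise a $y$-customer profits by moving to $y-\epsilon$. Your case split on the sign of $t^{(1)}+m_1-1$ is the same dichotomy, and your additional subcases ($m_1=1$, $m_L=1$, $L=2$) patch a genuine edge case the paper glosses over. One technical slip: your deviation target $t'=t^{(1)}+m_1-1-\epsilon$ need not lie in $(t^{(1)},t^{(2)})$, since continuous service only gives $t^{(2)}\leq t^{(1)}+m_1$; take $t'=t^{(1)}+\epsilon$ instead, which always secures slot $m_1-1$. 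Similarly, the ``gap deviation'' you write as $t^{(1)}+m_1-1+\epsilon$ should just be any point in $(t^{(1)},t^{(2)})$. Also, in the $L=2$, $m_L=1$ subcase, include the boundary $\bar{DC}_1=a_{n-1}$ in the second branch (the cluster-1 deviation still works there, since $a_{n-2}<a_{n-1}=\bar{DC}_1$).

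Your mixed-strategy plan, by contrast, is both different from the paper's and genuinely incomplete. You try to pin down support endpoints via monotonicity of $DC$, then hope an unspecified ``interior'' argument collapses everything; you yourself flag that atoms and interior mass are unresolved. Even the endpoint step has a gap: showing that no single player has a strictly smallest infimum does not yield that \emph{all} infima coincide. The paper's route is shorter and avoids all of this: for two players $i,j$ with $F_i\neq F_j$, take the first point $t$ where the CDFs differ (so $F_i(s)=F_j(s)$ for $s<t$). Since $i$ and $j$ are identical and each one's cost at $s<t$ depends only on the opponent's distribution restricted to $(-\infty,s]$, their costs agree on $(-\infty,t)$; at $t$ they differ because each now faces a different atom from the other. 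If both have mass below $t$ their equilibrium values must coincide, contradicting the discrepancy at $t$; if neither does, $i$'s support starts at $t$ and $j$'s at some $s>t$, and $j$'s revealed preference for $s$ over $t^-$ transfers to $i$, contradicting $t\in\operatorname{supp}(F_i)$. This pairwise-CDF comparison exploiting player symmetry is the missing idea in your mixed case.
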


\begin{lemma}\label{lemma:no_mixed}
There is no symmetric mixed-strategy equilibrium.
\end{lemma}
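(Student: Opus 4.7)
My plan is a proof by contradiction: suppose $F$ is a symmetric mixed equilibrium CDF whose support contains more than one point, and derive a profitable deviation. Symmetry forces the indifference condition $c(t;F) = \E_\tau[\gamma(-S(t))^+ + \beta S(t)^+] \equiv \mathrm{const}$ on $\mathrm{supp}(F)$, where $\tau = (\tau_1,\ldots,\tau_{n-1})$ collects the other customers' iid draws from $F$. I aim to show this equality cannot hold unless $F$ is a point mass.

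First I would fix a tractable structure for the service start time. By the same local exchange argument behind Lemma \ref{lemma:n_equi}, every realization that supports equilibrium play must produce a contiguous busy period, otherwise the customer arriving right after an idle gap profits by shifting earlier. Hence, conditional on $\tau$, the tagged customer's start time $S_\tau(t)$ equals $t$ while the tagged customer is the overall earliest and equals $\min_i \tau_i + K_\tau(t) - 1$ afterwards, where $K_\tau(t)$ denotes the tagged rank. In particular $S_\tau(\cdot)$ is a non-decreasing step function with unit jumps at each $\tau_i$, and $c(S_\tau(t))$ is piecewise constant between such jumps.

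Next I would differentiate $c(t;F)$ under the assumption that $F$ is atomless. Two effects appear: a ``first-in-line'' term $c'(t)(1-F(t))^{n-1}$, with $c'(t) = -\gamma$ on $t<0$ and $c'(t) = \beta$ on $t>0$; and a ``rank-crossing'' term of order $(n-1)f(t)$ times the expected jump in $c(S_\tau(\cdot))$ each time $t$ crosses some $\tau_i$. Setting the sum to zero on $\mathrm{supp}(F)$ yields a functional equation which, analysed separately on the earliness and tardiness sides of $0$, forces $f \equiv 0$ on any subinterval of $\mathrm{supp}(F)$ strictly wider than a unit window containing $0$. This is inconsistent with a non-degenerate $F$ whose support, by the analog of Lemma \ref{lemma:n_equi}(a), must straddle $0$ by more than one unit for any $n \geq 2$.

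The main obstacle is handling atoms of $F$, which escape the derivative argument. To dispose of them, I would show that if $F$ has mass $p > 0$ at some $t^*$, then with positive probability at least one other customer also arrives at $t^*$, and the uniform tie-breaking lottery makes the tagged customer's conditional expected cost equal to the \emph{average} of $c(t^* + k - 1)$ over the realized tied ranks. A deviation by a vanishing $\epsilon$ to either $t^* - \epsilon$ or $t^* + \epsilon$ secures a deterministic position within this set of ranks; because $c(\cdot)$ is piecewise linear with strictly different slopes $-\gamma$ and $\beta$ on the two sides of $0$, at least one of the two options strictly beats the randomized average. Combining the continuous-support analysis with this atom-elimination step rules out every non-degenerate $F$, closing the proof.
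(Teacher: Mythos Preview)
Your proposal has two substantive errors.

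First, the atom-elimination step is inverted. You claim that from an atom at $t^*$, at least one of the deviations $t^*\pm\epsilon$ \emph{strictly beats} the lottery average. But for a V-shaped cost $c(\cdot)$ the lottery average $\tfrac1k\sum_{j=0}^{k-1}c(t^*+j)$ is at most $\max\{c(t^*),c(t^*+k-1)\}$, and when the cluster's start times straddle $0$ the average is strictly \emph{below both} extremes. In that regime neither deviation is profitable; indeed, this is exactly why the pure symmetric equilibria of Proposition~\ref{prop:homogeneous_equil} exist. If your argument were valid it would also exclude those pure equilibria, since a pure strategy is a distribution with a single atom. The paper's atom argument runs the other way: the lottery at $t^*$ strictly dominates at least one of $t^*\pm\epsilon$, so the indifference required across a hole-free support fails.

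Second, your reading of Lemma~\ref{lemma:n_equi}(a) is off. That lemma says the first \emph{arrival} is before $0$ and the last \emph{service start} is after $0$; it does not say the support of $F$ straddles $0$. In fact both endpoints of $\mathrm{supp}(F)$ lie in $(-(n-1),0)$: a customer arriving at any $t\ge 0$ is last and weakly prefers $t=0$, while any $t>0$ is strictly worse. The paper's proof exploits this localization sharply: the marginal-cost calculation at the left endpoint $t_a$ forces $t_a\ge -1$ (otherwise a small rightward shift reduces cost), whence $\mathrm{supp}(F)\subset[-1,0)$; then at the right endpoint $t_b$ a small leftward shift swaps you ahead of one customer and drops your (necessarily tardy) start time by one unit, a strict improvement, contradicting indifference. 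Your sketch never reaches this endpoint contradiction because the premise that the support must be wider than a unit window containing $0$ is false.
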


The conclusion of this section is that any Nash equilibrium is pure and symmetric, that is, it is given by a single arrival time for all customers. In the following we will characterize all such equilibria and explore when the socially optimal solution is also an equilibrium. 

\subsection{Symmetric equilibrium}\label{sec:eq_solution}

If all customers arrive at time $t<0$ then the expected cost for each of them is determined by the uniform random ordering, resulting in a cost of
\[
\sum_{i=0}^{n-1}\frac{t+i}{n}\left[\beta\mathbf{1}_{\{t+i\geq 0\}}-\gamma\mathbf{1}_{\{t+i<0\}}\right].
\]
If properties (a) and (b) of Lemma \ref{lemma:n_equi} are satisfied, then a single customer has two reasonable options to deviate: arriving before everyone else and incurring a cost of at least $-\gamma t$, or obtaining service after all others have completed their service with a cost of at least $\beta(t+n-1)$. The latter cost can be achieved by arriving at any time during the interval $(t,t+n-1]$. Hence, the expected cost of choosing time $s$ when all others play $t$ is 
\begin{equation}\label{eq:cost_ts}
c(s;t)=\left\lbrace\begin{array}{ll}
-\gamma s \mbox{,}& s<t, \\
\frac{1}{n}\left[\beta\sum_{i=i_t+1}^{n-1} (t+i)-\gamma\sum_{i=0}^{i_t}(t+i)\right] \mbox{,}& s=t, \\
\beta\max\{t+n-1,s\} \mbox{,}& s>t,
\end{array}\right.
\end{equation}
where $i_t:=\max\{i:t+i<0\}$.

For $t$ to be a symmetric equilibrium the costs associated with deviating from $t$ must exceed the expected cost associated with arriving at $t$ with all the others. An immediate symmetric equilibrium $t^e$ is given by
\[
-\gamma t^e=\beta(t^e+n-1) \Leftrightarrow t^e=-(n-1)\frac{\beta}{\beta+\gamma}\ .
\]
This is an equilibrium because from \eqref{eq:cost_ts} we have that
\[
\begin{split}
c(t^e;t^e) &=  -\frac{1}{n}\gamma t^e+\frac{1}{n}\left[\beta\sum_{i=i_{t^e}+1}^{n-2} (t^e+i)-\gamma\sum_{i=1}^{i_{t^e}}(t^e+i)\right]+\frac{1}{n}\beta(t^e+n-1) \\
&\leq \beta(t^e+n-1)=-\gamma t^e\ ,
\end{split}
\]
as every element in the interior sums is smaller than the first and last elements (the start times are closer to zero), and therefore neither arriving before nor after all others benefits any one customer.

The above equilibrium is unique for $n=2$ but not in the general case. For example, suppose that all arrive at $t^e+\epsilon$ for some small $\epsilon>0$. If $n>2$ then by continuity of the cost function both extremal costs exceed the expected cost obtained by arriving at $t^e$. This behaviour is illustrated for an example with $n=5$ in Figure \ref{fig:Ctt}.

\begin{figure}[H]
\centering
\begin{tikzpicture}[xscale=2.5,yscale=0.8]
  \def\xmin{-4}
  \def\xmax{0}
  \def\ymin{0}
  \def\ymax{4.1}
    \draw[->] (\xmin,\ymin) -- (\xmax,\ymin) node[right] {$t$};
    \draw[->] (\xmin,\ymin) -- (\xmin,\ymax) node[above] {};
    \foreach \x in {-4,-3,-1,0}
    \node at (\x,\ymin) [below] {\x};
    \foreach \y in {0,1,2,3,4}
    \node at (\xmin,\y) [left] {\y};

    \draw[-,red,densely dotted,thick,domain=(\xmin:\xmax)]  plot (\x, {-\x});
    \draw[-,blue,densely dotted,thick,domain=(\xmin:\xmax)]  plot (\x, {\x+4});
    
    \draw[smooth] (-4,2)--	(-3.9,1.94)--	(-3.8,1.88)--	(-3.7,1.82)--	(-3.6,1.76)--	(-3.5,1.7)--	(-3.4,1.64)--	(-3.3,1.58)--	(-3.2,1.52)--	(-3.1,1.46)--	(-3,1.4)--	(-2.9,1.38)--	(-2.8,1.36)--	(-2.7,1.34)--	(-2.6,1.32)--	(-2.5,1.3)--	(-2.4,1.28)--	(-2.3,1.26)--	(-2.2,1.24)--	(-2.1,1.22)--	(-2,1.2)--	(-1.9,1.22)--	(-1.8,1.24)--	(-1.7,1.26)--	(-1.6,1.28)--	(-1.5,1.3)--	(-1.4,1.32)--	(-1.3,1.34)--	(-1.2,1.36)--	(-1.1,1.38)--	(-1,1.4)--	(-0.9,1.46)--	(-0.8,1.52)--	(-0.7,1.58)--	(-0.6,1.64)--	(-0.5,1.7)--	(-0.4,1.76)--	(-0.3,1.82)--	(-0.2,1.88)--	(-0.1,1.94)--	(0,2);
    
    \fill[color=gray!80, opacity=0.2,pattern=north west lines]   (-2.67,1.33) -- (-2.6,1.32)--	(-2.5,1.3)--	(-2.4,1.28)--	(-2.3,1.26)--	(-2.2,1.24)--	(-2.1,1.22)--	(-2,1.2)--	(-1.9,1.22)--	(-1.8,1.24)--	(-1.7,1.26)--	(-1.6,1.28)--	(-1.5,1.3)--	(-1.4,1.32) -- (-1.33,1.33) -- (-2,2) -- (-2.67,1.33);
    
    \draw[dashed,purple,thick]  (-2.67,0) -- (-2.67,1.33);
    \draw[] (-2.67,0) node[below,purple] {$\underline{t}$};
    \draw[dashed,purple,thick]  (-1.33,0) -- (-1.33,1.33);  
    \draw[] (-1.33,0) node[below,purple] {$\overline{t}$}; 
    
    \draw[] (-0.4,1.8) node[above] {$c(t;t)$}; 
    \draw[] (-0.3,0.4) node[above,red] {$-\gamma t$}; 
    \draw[] (-1.25,3) node[above,blue] {$\beta(t+n-1)$}; 
    
    \draw[dashed,black]  (-2,0) -- (-2,1.2);
    \node at (-2,\ymin) [below] {$t^e=-2$};	
\end{tikzpicture}
\caption{The cost of all customers arriving simultaneously at $t$, i.e. $c(t;t)$, compared with the cost a single customer can obtain by either deviation: $-\gamma t$ by arriving a moment before, or $\beta(t+n-1)$ by arriving after all are served. Arriving at $t$ is a best response to all others arriving at $t$ along the interval $(-2.66,-1.33)$ where both deviations are more costly. Example parameters: $n=5$, $\beta=\gamma=1$.}\label{fig:Ctt}
\end{figure}
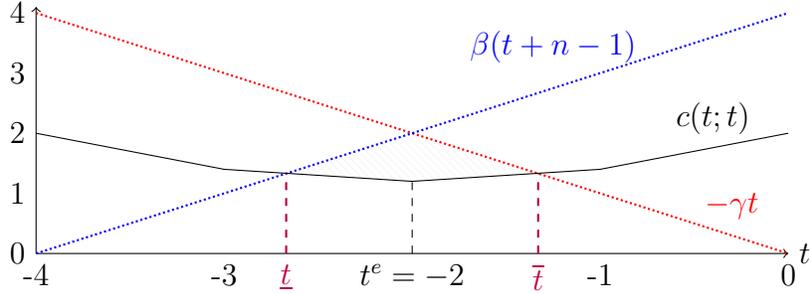

We can now fully characterize all possible equilibria. 
\begin{proposition}\label{prop:homogeneous_equil}
Let $\underline{t}$ be the unique solution of $\beta(t+n-1)=c(t;t)$; let $\overline{t}$ be the unique solution of $-\gamma t= c(t;t)$. The set of all equilibria is given by the pure and symmetric strategies of all customers arriving together at time $t$, such that
\[
t\in\tau^e=[\underline{t},\overline{t}]\subset(-(n-1),0)\ .
\]
\end{proposition}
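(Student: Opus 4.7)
The plan is to apply Lemmas \ref{lemma:no_asymmetric} and \ref{lemma:no_mixed} to restrict attention to pure symmetric profiles, and then characterize those common arrival times $t$ for which nobody wants to deviate via the best-response structure of \eqref{eq:cost_ts}. From that formula the best deviation cost available to a single customer, when all others play $t$, equals $\min\{L(t),R(t)\}$, where $L(t):=-\gamma t$ is the limiting cost of arriving an instant before the cluster (becoming first in line, early by $|t|$) and $R(t):=\beta(t+n-1)$ is the cost of being served after the cluster (achieved by any $s\in(t,t+n-1]$, since a strictly later arrival is placed at the back of the queue). Writing $f(t):=c(t;t)$, the common arrival $t$ is therefore a symmetric pure equilibrium if and only if $f(t)\le L(t)$ and $f(t)\le R(t)$.

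The second step is to analyse these three functions in order to extract uniqueness of $\underline{t}$ and $\overline{t}$. Both $L$ and $R$ are affine with slopes $-\gamma$ and $\beta$, while from \eqref{eq:cost_ts} $f$ is piecewise linear with breakpoints at the integers $-k$, $k\in\{1,\ldots,n-1\}$; on the piece where $i_t=k-1$ its slope equals $\frac{1}{n}\left[\beta(n-k)-\gamma k\right]$, so in particular every slope of $f$ lies between the extremal values $\frac{\beta-\gamma(n-1)}{n}$ and $\frac{(n-1)\beta-\gamma}{n}$. Consequently $R'(t)-f'(t)\ge(\beta+\gamma)/n>0$ on each linear piece, which combined with continuity of $R-f$ makes $R-f$ strictly increasing on the whole line and forces at most one root of $R(t)=f(t)$. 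Symmetrically $L'(t)-f'(t)\le -(\beta+\gamma)/n<0$, so $L-f$ is strictly decreasing and $L(t)=f(t)$ has at most one root.

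Existence of $\underline{t}$ and $\overline{t}$, the ordering $\underline{t}\le\overline{t}$, and the containment $[\underline{t},\overline{t}]\subset(-(n-1),0)$ all follow from evaluating at the interval endpoints and at the ``immediate'' candidate $t^{\ast}:=-(n-1)\beta/(\beta+\gamma)$. Direct substitution gives $f(-(n-1))=\gamma(n-1)/2>0=R(-(n-1))$ and $f(0)=\beta(n-1)/2>0=L(0)$, while the computation displayed just before the proposition yields $f(t^{\ast})\le L(t^{\ast})=R(t^{\ast})$. An intermediate-value argument combined with the strict monotonicity established above then produces a unique $\underline{t}\in(-(n-1),t^{\ast}]$ and a unique $\overline{t}\in[t^{\ast},0)$. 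On $[\underline{t},\overline{t}]$ both inequalities $f\le L$ and $f\le R$ hold by monotonicity, so every such $t$ is an equilibrium; outside this interval one of the two inequalities is strictly reversed, ruling out any further equilibria.

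The main technical hurdle is obtaining the uniform slope bounds on $f$ that drive the monotonicity of $L-f$ and $R-f$ across all pieces simultaneously; once the single expression $\frac{1}{n}\left[\beta(n-k)-\gamma k\right]$ is identified and its extrema over $k\in\{1,\ldots,n-1\}$ are taken, the bounds are immediate and the rest of the argument reduces to one-dimensional calculus on continuous, piecewise linear functions.
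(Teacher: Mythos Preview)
Your proof is correct and follows essentially the same route as the paper's: after invoking Lemmas~\ref{lemma:no_asymmetric} and~\ref{lemma:no_mixed} to restrict to pure symmetric profiles, both arguments reduce the equilibrium condition to the two inequalities $c(t;t)\le -\gamma t$ and $c(t;t)\le\beta(t+n-1)$, exploit the piecewise-linear structure of $c(t;t)$ with slopes strictly between $-\gamma$ and $\beta$ to obtain uniqueness of $\underline{t}$ and $\overline{t}$, and use the point $t^e=-(n-1)\beta/(\beta+\gamma)$ (your $t^{\ast}$) to guarantee that the interval is nonempty. Your slope bound $R'-f'\ge(\beta+\gamma)/n$ is slightly sharper and more explicit than the paper's bare observation $a(t)<\beta$, but the underlying idea is identical.
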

\begin{proof}
By Lemma \ref{lemma:n_equi}a, $t$ is not an equilibrium if $t\geq 0$ or $t\leq-(n-1)$. The equation $\beta(t+n-1)=c(t;t)$ has a solution $t\in(-(n-1),0)$ because by \eqref{eq:cost_ts}, $c(0;0)<(n-1)\beta$ and $c(-(n-1);-(n-1))>0$. The cost function \eqref{eq:cost_ts} is piecewise linear with respect to $t$, with a slope of
\[
a(t)=\frac{\beta(n-1-i_t)-\gamma i_t}{n}\ ,
\]
where $i_t:=\max\{i:t+i<0\}\geq 0$. Further, $t\in(-(n-1),0)$ implies $i_t\in(0,(n-1))$ and therefore $a(t)<\beta$, which in turn implies that the solution $\underline{t}$ of $\beta(t+n-1)=c(t;t)$ is unique. Similarly, as $c(0;0)>0$, $c(-(n-1),-(n-1))<(n-1)\gamma$ and $|a(t)|<\gamma$ we conclude that $-\gamma t=c(t;t)$ admits a unique solution, $\overline{t}$. Furthermore, $\beta(t+n-1)\geq c(t;t)$ is satisfied for any $t\geq \underline{t}$, and $-\gamma t\geq c(t;t)$ is satisfied for any $t\leq \underline{t}$. We conclude that the interval $\tau$ is the set of all equilibria with pure symmetric strategies. The interval is not empty because we already saw that $t^e=-(n-1)\frac{\beta}{\beta+\gamma}$ is always an equilibrium.
\end{proof}

Observe that for $n=2$ there is a unique solution $t^e=-\frac{\beta}{\beta+\gamma}$, and in particular $t^e=-\frac{1}{2}$ for $\beta=\gamma$, as was shown in Section \ref{sec:eq_n2}. Proposition \ref{prop:homogeneous_opt} characterized the socially optimal sequence of service start times, which can be achieved, for example, by a symmetric strategy of all arriving at $\tilde{s}$, i.e., at the optimal first service time. Denote this socially-optimal strategy by $t^*:=\tilde{s}$. It turns out that the social optimum is often an equilibrium, but not always, as is illustrated by an example in Figure \ref{fig:Ctt2}. We next give necessary and sufficient conditions for the socially optimal solution to be an equilibrium, with the proof in the Appendix.

\begin{figure}[H]
\centering
\begin{tikzpicture}[xscale=6,yscale=0.35]
 \def\xmin{-4}
  \def\xmax{-2}
  \def\ymin{0}
  \def\ymax{10.1}
    \draw[->] (\xmin,\ymin) -- (\xmax,\ymin) node[right] {$t$};
    \draw[->] (\xmin,\ymin) -- (\xmin,\ymax) node[above] {};
    \foreach \x in {-3,-2}
    \node at (\x,\ymin) [below] {\x};
    \foreach \y in {0,2,4,6,8,10}
    \node at (\xmin,\y) [left] {\y};

    \draw[-,red,densely dotted,thick,domain=(\xmin:\xmax)]  plot (\x, {-\x});
    \draw[-,blue,densely dotted,thick,domain=(\xmin:-2)]  plot (\x, {5*(\x+4)});
    
    \draw[smooth] (-4,2)--	(-3.9,2.02)--	(-3.8,2.04)--	(-3.7,2.06)--	(-3.6,2.08)--	(-3.5,2.1)--	(-3.4,2.12)--	(-3.3,2.14)--	(-3.2,2.16)--	(-3.1,2.18)--	(-3,2.2)--	(-2.9,2.34)--	(-2.8,2.48)--	(-2.7,2.62)--	(-2.6,2.76)--	(-2.5,2.9)--	(-2.4,3.04)--	(-2.3,3.18)--	(-2.2,3.32)--	(-2.1,3.46)--	(-2,3.6);
    
    \fill[color=gray!80, opacity=0.2,pattern=north west lines]  (-3.583333,2.09) -- (-3.3333,3.3333) -- (-2.666667,2.69) --(-2.7,2.62) -- (-2.8,2.48) -- (-2.9,2.34) -- (-3,2.2) -- (-3.1,2.18) -- (-3.2,2.16) -- (-3.3,2.14) -- (-3.4,2.12) -- (-3.5,2.1);
    
    \draw[dashed,purple,thick]  (-3.583333,0) -- (-3.583333,2.09);
    \draw[] (-3.583333,0) node[below,purple] {$\underline{t}$};
    \draw[dashed,purple,thick]  (-2.666667,0) -- (-2.666667,2.69);  
    \draw[] (-2.666667,0) node[below,purple] {$\overline{t}$}; 
    
    \draw[] (-2,4) node[above] {$c(t;t)$}; 
    \draw[] (-2,2) node[right,red] {$-\gamma t$}; 
    \draw[] (-2.1,9.2) node[above,blue] {$\beta(t+n-1)$}; 
    
    \draw[dashed,black]  (-3.333,0) -- (-3.333,2.12);
    \node at (-3.333,\ymin) [below] {$t^e$};	
    
     \node at (-3.85,-0.4) [below,purple,rectangle,draw] {$t^*=-4$};
     \draw[->,purple,smooth] (-3.85,-0.4) -- (-3.99,1.9);	
    
\end{tikzpicture}
\caption{\footnotesize{The cost of all customers arriving simultaneously at $t$, i.e. $c(t;t)$, compared with the cost a single customer can obtain by either deviation: $-\gamma t$ by arriving a moment before, or $\beta(t+n-1)$ by arriving after all are served. Arriving at $t$ is a best response to all others arriving at $t$ along the interval $\tau^e=(-3.58,-2.67)$ where both deviations are more costly. However the minimal total cost is attained at $t^*=-4$ as $c(t;t)$ is an increasing function. Example parameters: $n=5$, $\beta=5$, $\gamma=1$.}}\label{fig:Ctt2}
\end{figure}
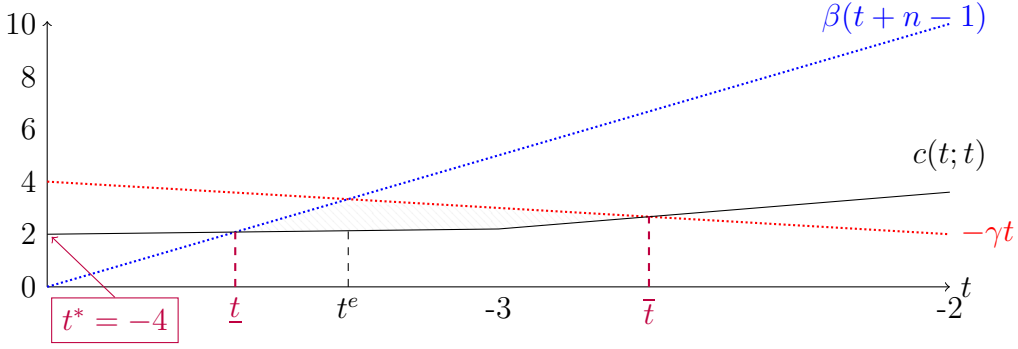

\begin{proposition}\label{prop:opt_equil}
There exists a socially optimal symmetric arrival time that is also an equilibrium if and only if $\frac{\beta}{\beta+\gamma}\in\left[\frac{1}{n},1-\frac{1}{n}\right]$.
\end{proposition}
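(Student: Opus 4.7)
My plan is to combine the characterizations of the socially optimal arrival times (Proposition \ref{prop:homogeneous_opt}) and of the equilibria (Proposition \ref{prop:homogeneous_equil}), and determine as $\tilde{s} = n\beta/(\beta+\gamma)$ varies when the two sets overlap. For necessity, I rely on the containment $\tau^e \subset (-(n-1),0)$ from Proposition \ref{prop:homogeneous_equil}: if $\tilde{s}\in(0,1)$ the unique optimum is $t^*=0$, and if $\tilde{s}\in(n-1,n)$ the unique optimum is $t^*=-(n-1)$; neither lies in $\tau^e$, so no optimal arrival is an equilibrium in these cases.

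For sufficiency when $\tilde{s}$ is an integer in $\{1,\ldots,n-1\}$, the set of socially optimal arrivals is the whole interval $[-\tilde{s},-\tilde{s}+1]$. The equilibrium $t^e=-(n-1)\beta/(\beta+\gamma)=-(n-1)\tilde{s}/n$ lies strictly inside this interval (directly from $0<\tilde{s}<n$) and was already shown in Section \ref{sec:eq_solution} to be an equilibrium, so one can simply take $t^*=t^e$.

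The substantive case is non-integer $\tilde{s}\in(1,n-1)$. Write $k=\floor{\tilde{s}}\in\{1,\ldots,n-2\}$ and $t^*=-k$; the consecutive start times $-k,-k+1,\ldots,n-k-1$ give
\[
c(-k;-k) = \frac{\gamma k(k+1) + \beta(n-k-1)(n-k)}{2n}.
\]
I substitute this into the equilibrium inequalities $-\gamma t^*\geq c(t^*;t^*)$ and $\beta(t^*+n-1)\geq c(t^*;t^*)$, clear denominators, and use $\beta/\gamma = \tilde{s}/(n-\tilde{s})$ together with the algebraic identity $k(2n-k-1)+(n-k-1)(n-k) = n(n-1)$ to reduce both inequalities to the single two-sided bound
\[
k(k+1) \leq \tilde{s}(n-1) \leq k(2n-k-1).
\]

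The main obstacle is verifying this two-sided bound automatically for every $k\in\{1,\ldots,n-2\}$ and $\tilde{s}\in(k,k+1)$. The lower bound is immediate from $\tilde{s}>k$ and $k+1\leq n-1$. The upper bound follows from $\tilde{s}<k+1$ together with $(k+1)(n-1)\leq k(2n-k-1)$, which rearranges to $n-1\leq k(n-k)$; this holds on $\{1,\ldots,n-1\}$ because $k\mapsto k(n-k)$ attains its minimum on this integer set at the endpoints $k=1$ and $k=n-1$, where it equals $n-1$. That this bound becomes tight exactly at $\tilde{s}=1$ and $\tilde{s}=n-1$ is precisely what makes the range $[1/n,1-1/n]$ in the statement sharp, closing both directions of the equivalence.
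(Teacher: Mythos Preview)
Your proof is correct and takes a genuinely different route from the paper's. Both arguments handle necessity the same way (the unique optimum $t^*\in\{0,-(n-1)\}$ falls outside $\tau^e\subset(-(n-1),0)$), but the sufficiency arguments diverge. The paper sets $k=-t^*$, $m=n-1-k$, writes $c(t^*;t^*)$ as a weighted average of the earliness and tardiness blocks, and then runs a three-way case split on the sign of $\gamma k-\beta m$; in the two asymmetric cases it invokes the \emph{optimality} of $t^*$ (shifting by one cannot reduce total cost) to bound the extremal penalty by the average of the opposite block. Your argument instead substitutes $\beta/\gamma=\tilde s/(n-\tilde s)$ and uses the identity $k(2n-k-1)+(n-k-1)(n-k)=n(n-1)$ to collapse both equilibrium inequalities into the single two-sided bound $k(k+1)\le \tilde s(n-1)\le k(2n-k-1)$, which you then verify by the elementary fact $k(n-k)\ge n-1$ on $\{1,\dots,n-1\}$.

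What each buys: the paper's argument is more conceptual---it makes transparent \emph{why} optimality of $t^*$ forces the equilibrium conditions---but pays for this with a case analysis and somewhat delicate averaging. Your reduction is purely algebraic and avoids any case split on $\gamma k$ versus $\beta m$; the identity you isolate is clean and makes the role of the thresholds $\tilde s=1$ and $\tilde s=n-1$ visible as the points where $k(n-k)=n-1$. Your handling of the integer case is also tidier: rather than re-verifying the inequalities at $t^*$, you simply observe that the known equilibrium $t^e=-(n-1)\tilde s/n$ already sits in the optimal interval $[-\tilde s,-\tilde s+1]$, which is immediate from $0<\tilde s<n$. One minor remark: your closing sentence about the bound ``becoming tight'' at $\tilde s\in\{1,n-1\}$ is suggestive but not needed for the equivalence, since you have already established necessity directly; the proof is complete without it.
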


Proposition \ref{prop:opt_equil} implies that equilibrium clustering of arrival times is often efficient. Specifically, when $n$ is not very small there is a socially optimal equilibrium for almost all values of $\beta$ and $\gamma$. This means that when the penalty function is fairly symmetric there is a socially optimal equilibrium. But when the penalty is heavily skewed in one direction the equilibrium arrivals are too early (small $\frac{\beta}{\beta+\gamma}$) or too late (large $\frac{\beta}{\beta+\gamma}$). This is illustrated in Figure \ref{fig:equil_opt}. Furthermore, note that even if for many parameter values there is a socially optimal equilibrium, welfare under most of the equilibria is less than under the socially optimal solution. In particular, the \textit{price of stability}, which is the ratio between the total costs of the best equilibrium and the social optimum, is typically one, whereas for $n>2$ the \textit{price of anarchy}, which is the ratio between the total costs of the worst equilibrium and the social optimum, exceeds $1$.

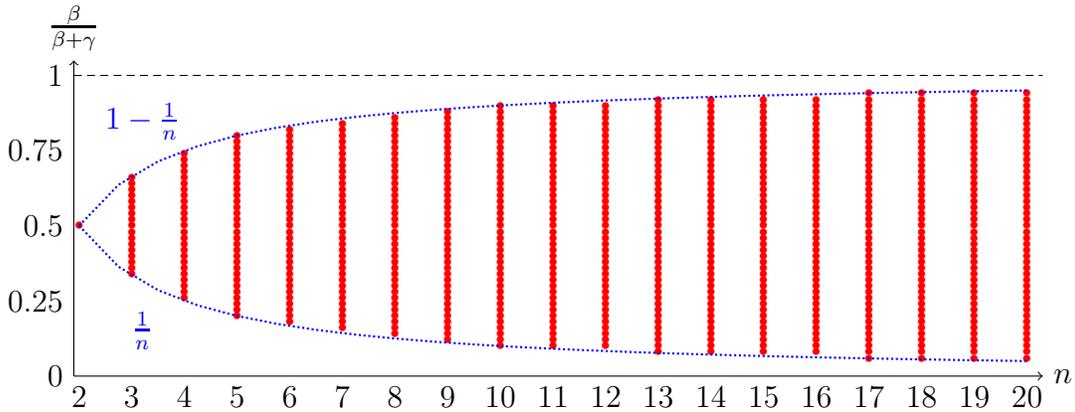
\begin{figure}[H]
\centering
\begin{tikzpicture}[xscale=0.7,yscale=4]
  \def\xmin{1.9}
  \def\xmax{20.3}
  \def\ymin{0}
  \def\ymax{1.05}
    \draw[->] (\xmin,\ymin) -- (\xmax,\ymin) node[right] {$n$};
    \draw[->] (\xmin,\ymin) -- (\xmin,\ymax) node[above] {$\frac{\beta}{\beta+\gamma}$};
    \foreach \x in {2,3,4,5,6,7,8,9,10,11,12,13,14,15,16,17,18,19,20}
    \node at (\x,\ymin) [below] {\x};
    \foreach \y in {0,0.25,0.5,0.75,1}
    \node at (\xmin,\y) [left] {\y};

	\foreach \Point in {(2,0.5),	(3,0.34),	(3,0.36),	(3,0.38),	(3,0.4),	(3,0.42),	(3,0.44),	(3,0.46),	(3,0.48),	(3,0.5),	(3,0.52),	(3,0.54),	(3,0.56),	(3,0.58),	(3,0.6),	(3,0.62),	(3,0.64),	(3,0.66),	(4,0.26),	(4,0.28),	(4,0.3),	(4,0.32),	(4,0.34),	(4,0.36),	(4,0.38),	(4,0.4),	(4,0.42),	(4,0.44),	(4,0.46),	(4,0.48),	(4,0.5),	(4,0.52),	(4,0.54),	(4,0.56),	(4,0.58),	(4,0.6),	(4,0.62),	(4,0.64),	(4,0.66),	(4,0.68),	(4,0.7),	(4,0.72),	(4,0.74),	(5,0.2),	(5,0.22),	(5,0.24),	(5,0.26),	(5,0.28),	(5,0.3),	(5,0.32),	(5,0.34),	(5,0.36),	(5,0.38),	(5,0.4),	(5,0.42),	(5,0.44),	(5,0.46),	(5,0.48),	(5,0.5),	(5,0.52),	(5,0.54),	(5,0.56),	(5,0.58),	(5,0.6),	(5,0.62),	(5,0.64),	(5,0.66),	(5,0.68),	(5,0.7),	(5,0.72),	(5,0.74),	(5,0.76),	(5,0.78),	(5,0.8),	(6,0.18),	(6,0.2),	(6,0.22),	(6,0.24),	(6,0.26),	(6,0.28),	(6,0.3),	(6,0.32),	(6,0.34),	(6,0.36),	(6,0.38),	(6,0.4),	(6,0.42),	(6,0.44),	(6,0.46),	(6,0.48),	(6,0.5),	(6,0.52),	(6,0.54),	(6,0.56),	(6,0.58),	(6,0.6),	(6,0.62),	(6,0.64),	(6,0.66),	(6,0.68),	(6,0.7),	(6,0.72),	(6,0.74),	(6,0.76),	(6,0.78),	(6,0.8),	(6,0.82),	(7,0.16),	(7,0.18),	(7,0.2),	(7,0.22),	(7,0.24),	(7,0.26),	(7,0.28),	(7,0.3),	(7,0.32),	(7,0.34),	(7,0.36),	(7,0.38),	(7,0.4),	(7,0.42),	(7,0.44),	(7,0.46),	(7,0.48),	(7,0.5),	(7,0.52),	(7,0.54),	(7,0.56),	(7,0.58),	(7,0.6),	(7,0.62),	(7,0.64),	(7,0.66),	(7,0.68),	(7,0.7),	(7,0.72),	(7,0.74),	(7,0.76),	(7,0.78),	(7,0.8),	(7,0.82),	(7,0.84),	(8,0.14),	(8,0.16),	(8,0.18),	(8,0.2),	(8,0.22),	(8,0.24),	(8,0.26),	(8,0.28),	(8,0.3),	(8,0.32),	(8,0.34),	(8,0.36),	(8,0.38),	(8,0.4),	(8,0.42),	(8,0.44),	(8,0.46),	(8,0.48),	(8,0.5),	(8,0.52),	(8,0.54),	(8,0.56),	(8,0.58),	(8,0.6),	(8,0.62),	(8,0.64),	(8,0.66),	(8,0.68),	(8,0.7),	(8,0.72),	(8,0.74),	(8,0.76),	(8,0.78),	(8,0.8),	(8,0.82),	(8,0.84),	(8,0.86),	(9,0.12),	(9,0.14),	(9,0.16),	(9,0.18),	(9,0.2),	(9,0.22),	(9,0.24),	(9,0.26),	(9,0.28),	(9,0.3),	(9,0.32),	(9,0.34),	(9,0.36),	(9,0.38),	(9,0.4),	(9,0.42),	(9,0.44),	(9,0.46),	(9,0.48),	(9,0.5),	(9,0.52),	(9,0.54),	(9,0.56),	(9,0.58),	(9,0.6),	(9,0.62),	(9,0.64),	(9,0.66),	(9,0.68),	(9,0.7),	(9,0.72),	(9,0.74),	(9,0.76),	(9,0.78),	(9,0.8),	(9,0.82),	(9,0.84),	(9,0.86),	(9,0.88),	(10,0.1),	(10,0.12),	(10,0.14),	(10,0.16),	(10,0.18),	(10,0.2),	(10,0.22),	(10,0.24),	(10,0.26),	(10,0.28),	(10,0.3),	(10,0.32),	(10,0.34),	(10,0.36),	(10,0.38),	(10,0.4),	(10,0.42),	(10,0.44),	(10,0.46),	(10,0.48),	(10,0.5),	(10,0.52),	(10,0.54),	(10,0.56),	(10,0.58),	(10,0.6),	(10,0.62),	(10,0.64),	(10,0.66),	(10,0.68),	(10,0.7),	(10,0.72),	(10,0.74),	(10,0.76),	(10,0.78),	(10,0.8),	(10,0.82),	(10,0.84),	(10,0.86),	(10,0.88),	(10,0.9),	(11,0.1),	(11,0.12),	(11,0.14),	(11,0.16),	(11,0.18),	(11,0.2),	(11,0.22),	(11,0.24),	(11,0.26),	(11,0.28),	(11,0.3),	(11,0.32),	(11,0.34),	(11,0.36),	(11,0.38),	(11,0.4),	(11,0.42),	(11,0.44),	(11,0.46),	(11,0.48),	(11,0.5),	(11,0.52),	(11,0.54),	(11,0.56),	(11,0.58),	(11,0.6),	(11,0.62),	(11,0.64),	(11,0.66),	(11,0.68),	(11,0.7),	(11,0.72),	(11,0.74),	(11,0.76),	(11,0.78),	(11,0.8),	(11,0.82),	(11,0.84),	(11,0.86),	(11,0.88),	(11,0.9),	(12,0.1),	(12,0.12),	(12,0.14),	(12,0.16),	(12,0.18),	(12,0.2),	(12,0.22),	(12,0.24),	(12,0.26),	(12,0.28),	(12,0.3),	(12,0.32),	(12,0.34),	(12,0.36),	(12,0.38),	(12,0.4),	(12,0.42),	(12,0.44),	(12,0.46),	(12,0.48),	(12,0.5),	(12,0.52),	(12,0.54),	(12,0.56),	(12,0.58),	(12,0.6),	(12,0.62),	(12,0.64),	(12,0.66),	(12,0.68),	(12,0.7),	(12,0.72),	(12,0.74),	(12,0.76),	(12,0.78),	(12,0.8),	(12,0.82),	(12,0.84),	(12,0.86),	(12,0.88),	(12,0.9),	(13,0.08),	(13,0.1),	(13,0.12),	(13,0.14),	(13,0.16),	(13,0.18),	(13,0.2),	(13,0.22),	(13,0.24),	(13,0.26),	(13,0.28),	(13,0.3),	(13,0.32),	(13,0.34),	(13,0.36),	(13,0.38),	(13,0.4),	(13,0.42),	(13,0.44),	(13,0.46),	(13,0.48),	(13,0.5),	(13,0.52),	(13,0.54),	(13,0.56),	(13,0.58),	(13,0.6),	(13,0.62),	(13,0.64),	(13,0.66),	(13,0.68),	(13,0.7),	(13,0.72),	(13,0.74),	(13,0.76),	(13,0.78),	(13,0.8),	(13,0.82),	(13,0.84),	(13,0.86),	(13,0.88),	(13,0.9),	(13,0.92),	(14,0.08),	(14,0.1),	(14,0.12),	(14,0.14),	(14,0.16),	(14,0.18),	(14,0.2),	(14,0.22),	(14,0.24),	(14,0.26),	(14,0.28),	(14,0.3),	(14,0.32),	(14,0.34),	(14,0.36),	(14,0.38),	(14,0.4),	(14,0.42),	(14,0.44),	(14,0.46),	(14,0.48),	(14,0.5),	(14,0.52),	(14,0.54),	(14,0.56),	(14,0.58),	(14,0.6),	(14,0.62),	(14,0.64),	(14,0.66),	(14,0.68),	(14,0.7),	(14,0.72),	(14,0.74),	(14,0.76),	(14,0.78),	(14,0.8),	(14,0.82),	(14,0.84),	(14,0.86),	(14,0.88),	(14,0.9),	(14,0.92),	(15,0.08),	(15,0.1),	(15,0.12),	(15,0.14),	(15,0.16),	(15,0.18),	(15,0.2),	(15,0.22),	(15,0.24),	(15,0.26),	(15,0.28),	(15,0.3),	(15,0.32),	(15,0.34),	(15,0.36),	(15,0.38),	(15,0.4),	(15,0.42),	(15,0.44),	(15,0.46),	(15,0.48),	(15,0.5),	(15,0.52),	(15,0.54),	(15,0.56),	(15,0.58),	(15,0.6),	(15,0.62),	(15,0.64),	(15,0.66),	(15,0.68),	(15,0.7),	(15,0.72),	(15,0.74),	(15,0.76),	(15,0.78),	(15,0.8),	(15,0.82),	(15,0.84),	(15,0.86),	(15,0.88),	(15,0.9),	(15,0.92),	(16,0.08),	(16,0.1),	(16,0.12),	(16,0.14),	(16,0.16),	(16,0.18),	(16,0.2),	(16,0.22),	(16,0.24),	(16,0.26),	(16,0.28),	(16,0.3),	(16,0.32),	(16,0.34),	(16,0.36),	(16,0.38),	(16,0.4),	(16,0.42),	(16,0.44),	(16,0.46),	(16,0.48),	(16,0.5),	(16,0.52),	(16,0.54),	(16,0.56),	(16,0.58),	(16,0.6),	(16,0.62),	(16,0.64),	(16,0.66),	(16,0.68),	(16,0.7),	(16,0.72),	(16,0.74),	(16,0.76),	(16,0.78),	(16,0.8),	(16,0.82),	(16,0.84),	(16,0.86),	(16,0.88),	(16,0.9),	(16,0.92),	(17,0.06),	(17,0.08),	(17,0.1),	(17,0.12),	(17,0.14),	(17,0.16),	(17,0.18),	(17,0.2),	(17,0.22),	(17,0.24),	(17,0.26),	(17,0.28),	(17,0.3),	(17,0.32),	(17,0.34),	(17,0.36),	(17,0.38),	(17,0.4),	(17,0.42),	(17,0.44),	(17,0.46),	(17,0.48),	(17,0.5),	(17,0.52),	(17,0.54),	(17,0.56),	(17,0.58),	(17,0.6),	(17,0.62),	(17,0.64),	(17,0.66),	(17,0.68),	(17,0.7),	(17,0.72),	(17,0.74),	(17,0.76),	(17,0.78),	(17,0.8),	(17,0.82),	(17,0.84),	(17,0.86),	(17,0.88),	(17,0.9),	(17,0.92),	(17,0.94),	(18,0.06),	(18,0.08),	(18,0.1),	(18,0.12),	(18,0.14),	(18,0.16),	(18,0.18),	(18,0.2),	(18,0.22),	(18,0.24),	(18,0.26),	(18,0.28),	(18,0.3),	(18,0.32),	(18,0.34),	(18,0.36),	(18,0.38),	(18,0.4),	(18,0.42),	(18,0.44),	(18,0.46),	(18,0.48),	(18,0.5),	(18,0.52),	(18,0.54),	(18,0.56),	(18,0.58),	(18,0.6),	(18,0.62),	(18,0.64),	(18,0.66),	(18,0.68),	(18,0.7),	(18,0.72),	(18,0.74),	(18,0.76),	(18,0.78),	(18,0.8),	(18,0.82),	(18,0.84),	(18,0.86),	(18,0.88),	(18,0.9),	(18,0.92),	(18,0.94),	(19,0.06),	(19,0.08),	(19,0.1),	(19,0.12),	(19,0.14),	(19,0.16),	(19,0.18),	(19,0.2),	(19,0.22),	(19,0.24),	(19,0.26),	(19,0.28),	(19,0.3),	(19,0.32),	(19,0.34),	(19,0.36),	(19,0.38),	(19,0.4),	(19,0.42),	(19,0.44),	(19,0.46),	(19,0.48),	(19,0.5),	(19,0.52),	(19,0.54),	(19,0.56),	(19,0.58),	(19,0.6),	(19,0.62),	(19,0.64),	(19,0.66),	(19,0.68),	(19,0.7),	(19,0.72),	(19,0.74),	(19,0.76),	(19,0.78),	(19,0.8),	(19,0.82),	(19,0.84),	(19,0.86),	(19,0.88),	(19,0.9),	(19,0.92),	(19,0.94),	(20,0.06),	(20,0.08),	(20,0.1),	(20,0.12),	(20,0.14),	(20,0.16),	(20,0.18),	(20,0.2),	(20,0.22),	(20,0.24),	(20,0.26),	(20,0.28),	(20,0.3),	(20,0.32),	(20,0.34),	(20,0.36),	(20,0.38),	(20,0.4),	(20,0.42),	(20,0.44),	(20,0.46),	(20,0.48),	(20,0.5),	(20,0.52),	(20,0.54),	(20,0.56),	(20,0.58),	(20,0.6),	(20,0.62),	(20,0.64),	(20,0.66),	(20,0.68),	(20,0.7),	(20,0.72),	(20,0.74),	(20,0.76),	(20,0.78),	(20,0.8),	(20,0.82),	(20,0.84),	(20,0.86),	(20,0.88),	(20,0.9),	(20,0.92),	(20,0.94)}
    {\node[red] at \Point {\tiny{\textbullet}};}
    
    \draw[densely dashed] (\xmin,1)--	(\xmax,1);
    
    \draw[-,blue,densely dotted,thick,domain=(2:20)]  plot (\x, {(\x-1)/\x});
    \node at (3.2,0.75) [above,blue] {$1-\frac{1}{n}$};
    
    \draw[-,blue,densely dotted,thick,domain=(2:20)]  plot (\x, {1-(\x-1)/\x});
    \node at (3.2,0.25) [below,blue] {$\frac{1}{n}$};
 
\end{tikzpicture}
\caption{Range of parameter values (red dotted lines) such that there is a socially optimal equilibrium.}\label{fig:equil_opt}
\end{figure}

\section{Sensitivity analysis}\label{sec:sensitivity}
It is interesting to explore what assumptions are crucial for the existence of the pure and symmetric equilibria. We consider some possible deviations from the assumptions in the basic model.

\subsection{Restricted server availability}\label{sec:restricted}

Suppose that the sever only admits jobs into the queue during the interval $\mathcal{A}=[a,b]$. We assume that all jobs in the queue at time $b$ are served so the server may still operate after $b$ even though no new jobs are allowed to join. 

Recall that by Proposition \ref{prop:homogeneous_equil} the set of pure strategy equilibria satisfies $\tau^e\subset\left(-\frac{1}{n-1},0\right)$. Therefore, if $a\leq -\frac{1}{n-1}$ and $b\geq 0$ then the equilibrium outcome is the same as in the base model. In fact, this identity holds if $a\leq \underline{t}$ and $b\geq \overline{t}$, where the thresholds are as defined in Proposition \ref{prop:homogeneous_equil}. The more interesting cases occur when the interval of equilibrium arrival times in the base model  is partially or fully unavailable. In the next proposition we characterize all possible equilibria for the restricted server availability model.

\begin{proposition}\label{prop:restricted}
Let $\tau^r=[a,b]\cap\tau^e$, where $\tau^e=[\underline{t},\overline{t}]$ as defined in Proposition \ref{prop:homogeneous_equil}.
\begin{enumerate}[label=(\alph*)] 
\item If $\tau^r\neq\emptyset$ then $\tau^r$ is the set of all pure-strategy equilibria.
\item If $\tau^r=\emptyset$ and $a> \overline{t}$ then all arriving at $t=a$ is the unique pure-strategy equilibrium.
\item If $\tau^r=\emptyset$ and $b< \underline{t}$ then all arriving at $t=b$ is the unique pure-strategy equilibrium.
\end{enumerate}
\end{proposition}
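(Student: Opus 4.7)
The plan is to exploit Proposition \ref{prop:homogeneous_equil} together with the observation that restricting $\mathcal{A}$ to $[a,b]$ only shrinks the set of feasible deviations without altering the cost $c(s;t)$ of \eqref{eq:cost_ts} on $[a,b]\times[a,b]$. I would first argue that any equilibrium in the restricted model is again a single symmetric arrival time, by verifying that the proofs of Lemmas \ref{lemma:n_equi}, \ref{lemma:no_asymmetric}, and \ref{lemma:no_mixed} go through with the deviation space $(-\infty,\infty)$ replaced by $[a,b]$; they compare costs of candidate local deviations, most of which remain feasible after truncation. The problem then reduces to identifying the symmetric arrival times $t\in[a,b]$ that are best responses to themselves.

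The next step is to revisit the deviation analysis from the proof of Proposition \ref{prop:homogeneous_equil}. There, the only deviations that can strictly improve upon the symmetric cost $c(t;t)$ are (i) arriving momentarily before the others at $s=t-\epsilon$, with limiting cost $-\gamma t$, which is profitable iff $t>\overline{t}$, and (ii) arriving at some $s\in(t,t+n-1]$ with cost $\beta(t+n-1)$, which is profitable iff $t<\underline{t}$. In the restricted model, (i) is feasible iff $t>a$, and (ii) is feasible iff $t<b$ (one may take $s$ just above $t$ to obtain the same cost $\beta(t+n-1)$, provided $t+n-1\geq 0$).

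Combining these two observations yields all three parts. For part (a), $\tau^r\neq\emptyset$ forces $a\leq\overline{t}$ and $b\geq\underline{t}$; any $t\in\tau^r$ avoids both deviations and is an equilibrium, while any $t\in[a,b]\setminus\tau^r$ satisfies either $t<\underline{t}\leq b$ or $t>\overline{t}\geq a$, so the corresponding profitable deviation is feasible and $t$ is eliminated. For part (b), $a>\overline{t}\geq\underline{t}$ implies every $t\in[a,b]$ satisfies $t\geq a\geq\underline{t}$ and $t\geq a>\overline{t}$, so (ii) is never profitable and (i) is feasible precisely when $t>a$, giving $t=a$ as the unique equilibrium. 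Part (c) is fully symmetric, with (i) ruled out by $b<\underline{t}\leq\overline{t}$ and (ii) infeasible only at the right endpoint $t=b$.

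The main obstacle I expect is the first step: carrying the proofs of Lemmas \ref{lemma:no_asymmetric} and \ref{lemma:no_mixed} over to the restricted setting. Those arguments presumably construct improving deviations at points that, in the base model, are unrestricted but in the restricted model may fall outside $[a,b]$---for instance, an asymmetric profile in which some customer arrives exactly at $a$ cannot be undermined by ``arriving slightly earlier.'' Handling this cleanly may require an auxiliary argument that if some customer plays $a$ (resp.\ $b$) then all others weakly prefer to do the same, thereby reducing the analysis to the symmetric case and letting the deviation-feasibility argument above deliver the stated characterization.
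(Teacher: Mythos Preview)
Your approach---identifying the two relevant deviations (arrive just before $t$, with cost $-\gamma t$; arrive just after $t$, with cost $\beta(t+n-1)$) and checking their feasibility against the endpoints $a$ and $b$---is exactly what the paper does. The paper's proof is in fact terser than yours: for part (a) it only verifies that every $t\in\tau^r$ lies in $\tau^e$ and hence is an equilibrium, without arguing the reverse containment; for parts (b) and (c) it checks that the endpoint ($a$ or $b$) is an equilibrium by noting that one deviation is unprofitable and the other infeasible, but it does not explicitly argue uniqueness. Your case analysis in the third paragraph actually supplies these missing directions.

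Your stated concern about re-establishing Lemmas \ref{lemma:no_asymmetric} and \ref{lemma:no_mixed} in the restricted model is legitimate, and the paper simply does not address it: the proof tacitly treats ``pure-strategy equilibrium'' as synonymous with ``symmetric arrival time,'' presumably relying on the base-model lemmas without comment. So the obstacle you flag is real, but the paper does not surmount it either; your proposed patch (arguing that a boundary arrival by one customer forces the others to follow) is a reasonable way to close the gap and goes beyond what the paper provides.
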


\begin{remark}
Proposition \ref{prop:restricted} shows that the restricted model may have different equilibrium outcomes than the base model. For example, if $b<\underline{t}$ then no arrivals are allowed during the interval of base model equilibria and $\tau^r=\emptyset$. Part (c) of the proposition implies that $t=b\notin\tau^e$ is the unique equilibrium. However, when any segment of the base model equilibrium interval is available, i.e., $\tau^r\neq\emptyset$, then any equilibrium in the restricted model is also an equilibrium in the base model.
\end{remark}

\subsection{Waiting cost}\label{sec:waiting}
We define the waiting time insensitivity cost as follows: suppose that there is a penalty of $\alpha>0$ per unit of waiting time in the queue until service commences. We claim that if $n>2$ there is still an interval of pure symmetric equilibrium arrival times for $\alpha$ sufficiently small.

Consider the homogeneous two-customer example: $n=2$ and $\beta=\gamma$. If both customers arrive at $t=-\frac{1}{2}$ then each individual's cost is
\[
\frac{1}{2}\left(\alpha\cdot 0+\beta\cdot\frac{1}{2}\right)+\frac{1}{2}\left(\alpha\cdot 1+\beta\cdot\frac{1}{2}\right)=\frac{1}{2}(\beta+\alpha).
\]
This is not an equilibrium because arriving $\epsilon<\alpha$ before $-\frac{1}{2}$ will reduce the cost to $\frac{1}{2}(\beta+\epsilon)$. Therefore that equilibrium is possible only with mixed strategies that are given by continuous distributions of arrival times (i.e. no atoms).

If, however, $n>2$ then $t^e=-(n-1)\frac{\beta}{\beta+\gamma}$ is still an equilibrium for sufficiently small $\alpha$. Using the notation of Section \ref{sec:equilibrium}, the cost to an individual customer when all arrive at time $t$ is
\[
c_\alpha(t;t)=c(t;t)+\frac{\alpha}{n}\sum_{i=0}^{n-1}i=c(t;t)+\alpha\frac{n-1}{2}\xrightarrow{\alpha\to 0}c(t;t) \ ,
\]
where $c(t;t)$ is given by \eqref{eq:cost_ts}. Recall that $t^e$ was the time that ensured that the first and last customers in the realised random draw incur a cost of exactly $-\gamma t^e$. Hence, if $\alpha$ is small then arriving a moment before $t^e$ or at $t^e+n-1$ yields a cost of $-\gamma t^e$, which exceeds $c(t^e;t^e)+\alpha\frac{n-1}{2}$ for small enough $\alpha$. Furthermore, by continuity there is an interval of equilibrium points $\tau^\alpha=[\underline{t},\overline{t}]$ (see Proposition \ref{prop:homogeneous_equil}), where the interval bounds depend on $\alpha$ as follows,
\begin{align*}
\underline{t} &= \inf\left\lbrace t<0:\ \beta(t+n-1)-\alpha\frac{n-1}{2}\geq c(t;t)\right\rbrace\ , \\
\overline{t} &= \sup\left\lbrace t<0:\ -\gamma t-\alpha\frac{n-1}{2}\geq c(t;t)\right\rbrace\ .
\end{align*}
In particular, the interval of equilibria shrinks with $\alpha$; that is, $\tau^{\alpha_1}\subset\tau^{\alpha_2}\subset\tau^{0}$ for any $\overline{\alpha}>\alpha_1>\alpha_2>0$, where $\overline{\alpha}$ is the highest waiting penalty that still attains a pure equilibrium (which is unique because $\underline{t}=\overline{t}$). 

\subsection{Stochastic service times}\label{sec:stochastic}

Let $G$ be the common cdf of a customer's service time. We will present two examples that show that the atomic solution of all arriving together is still an equilibrium for some service distributions, but not for all. The first example is a two-customer game with two possible service times, which can be thought of as `high' and `low', with each customer independently drawing one of them with probability $1/2$. In this case there always exists a symmetric pure strategy equilibrium. The next example has exponential service times. We show that for $n=2$ there is no pure-strategy equilibrium, but for $n>2$ there is an interval of symmetric equilibria.

\begin{example} $n=2$ and discrete uniform service times with two possible service lengths $\{a,b\}$, without loss of generality $a=1$ and $b=2$, given by
\[
G(x)=\left\lbrace\begin{array}{ll}
0 \mbox{,} & x<1, \\
\frac{1}{2} \mbox{,} & 1\leq x<2, \\
1 \mbox{,} & x\geq 2. 
\end{array}\right.
\]
The unique symmetric pure equilibrium is
\[
t^e=\left\lbrace\begin{array}{ll}
-\frac{3\beta}{2(\beta+\gamma)} \mbox{,} & \beta\leq 2\gamma,\\
-\frac{2\beta-\gamma}{\beta+\gamma} \mbox{,} & \beta> 2\gamma.
\end{array}\right.
\]
Suppose that customer $2$ arrives at $t \in (-1,0)$; then customer $1$ has four reasonable arrival times
\begin{enumerate}
\item[(a)] A moment, say $\epsilon$, before $t$, with a cost of $-\gamma t+\gamma\epsilon$.
\item[(b)] At exactly $t$, with a cost of
\[
-\frac{1}{2}\gamma t+\frac{1}{2}\left[\frac{1}{2}\beta(t+1)+\frac{1}{2}\beta(t+2)\right]\ .
\]
\item[(c)] At $t+1$, with a cost of
\[
\frac{1}{2}\beta(t+1)+\frac{1}{2}\beta(t+2)\ .
\]
\item[(d)] At $t+2$, with a cost of $\beta(t+2)$.
\end{enumerate}
Clearly, (d) is worse than (c) for any $\gamma$ and $\beta$. If $\beta \leq 2\gamma$ the value $t^e=-\frac{3\beta}{2(\beta+\gamma)}$ ensures that the cost (c) is $-\gamma t^e$, and consequentially is the same as (b). In this case deviating to (a) is sub-optimal for any $\epsilon>0$. Indeed, the best response function has the same form as the deterministic service example illustrated in Figure \ref{fig:N2_BR}.

The case of $\beta> 2\gamma$ can be verified using similar arguments.
\end{example}

\begin{example}
Consider $n$ customers with independent exponential service times, $G(x)=1-e^{-x}$. Denote the convolution of $k$ service times by $Y_k=\sum_{i=1}^k X_i$, where $X_i\sim\mathrm{Exp}(1)$, and note that $Y_k$ follows an Erlang distribution with parameters $(k,1)$. If all arrive at time $t<0$ then the expected cost for each customer is 
\begin{align*}
c_G(t;t) &=\frac{1}{n}\left[-\gamma \left(t+\sum_{i=1}^{n-1}\int_0^{-t}(t+s)\frac{s^{i-1}e^{-s}}{(i-1)!}\ ds\right)+\beta\sum_{i=1}^{n-1}\int_{-t}^{\infty}(t+s)\frac{s^{i-1}e^{-s}}{(i-1)!}\ ds\right] \\ 
&= \frac{1}{n}\Big[-\gamma t-\gamma\sum_{i=1}^{n-1}(t\P(Y_i\leq -t)+i\P(Y_{i+1}\leq -t))\\
&\mbox{}\quad \quad +\beta\sum_{i=1}^{n-1}(t\P(Y_i>-t)+i\P(Y_{i+1}>-t))\Big]\ .
\end{align*}

For any symmetric arrival time $t<0$ there are two reasonable options for a single customer to deviate: arriving momentarily before $t$ with a cost of $\approx-\gamma t$; arriving at $s=0$ which yields a cost that is at least as small as any $s>0$ for every realization of the service times and strictly smaller than the cost incurred by arriving at $s\in(t,0)$. When deviating to $s=0$ the expected cost is
\[
c_G(0;t)=\beta\int_{-t}^\infty (t+s)\frac{s^{n-2}e^{-s}}{(n-2)!}\ ds=\beta(t\P(Y_{n-1}>-t)+(n-1)\P(Y_{n}>-t))\ .
\]
Therefore, the set of pure symmetric equilibria is given by the interval $\tau=[\underline{t},\overline{t}]$, where $\underline{t}=\inf\{t<0:\ c_G(0,t)\geq c_G(t;t)\}$ and $\overline{t}=\sup\{t<0:\ -\gamma t\geq c_G(t;t)\}$.

For $n=2$ and $t<0$,
\[
c_G(t;t)=-\frac{1}{2}\gamma t+\frac{1}{2}\left[-\gamma(1+t)+(\beta+\gamma)e^t\right]\ ,
\]
and
\[
c_G(0;t)=\beta e^{t}\ .
\]
Equating $-\gamma t=c(t;t)$ yields
\[
\overline{t}=\log\left(\frac{\gamma}{\beta+\gamma}\right)\ ,
\]
i.e., if a customer arrives at $t\in\left(\log\left(\frac{\gamma}{\beta+\gamma}\right),0\right]$ then the other customer will arrive a moment earlier. However, for any $t\leq\overline{t}$ we have 
\[
c_G(0;t)=\beta e^{t}< -\gamma(1+t)+(\beta+\gamma)e^t\leq c_G(t;t)\ ,
\]
and thus $\overline{t}<\underline{t}$ and $\tau=\emptyset$. We conclude that there is no pure symmetric equilibrium in the two-customer game with exponential service times.

Numerical analysis suggests that $n=2$ is the exception and that for $n>2$ there is always an interval of pure and symmetric equilibria, as in section \ref{sec:eq_solution}. $\blacktriangle$
\end{example}

The above examples show that with a stochastic service time equilibria in pure strategies may not exist, but such scenarios need to be carefully constructed. In general, for $n>2$ multiple equilibria may exist; characterizing all of them is likely to involve elaborate analysis of multiple parameter cases.

\subsection{Random population size}\label{sec:random_pop}

Let $N$ be a random variable representing the number of arriving customers; denote the corresponding pdf by $\pi_n=\P(N=n)$. This is equivalent to stating that every arriving customer believes that there are $M$ additional customers, where
\[
q_m:=\P(M=m)=\frac{(m+1)\pi_{m+1}}{\E[N]}\ .
\]
See \cite{MM1987} for discussion and details on games with a random number of players. A special case is that of of a Poisson distributed population, which is appropriate when many customers join independently with a small probability, and then $q_m=\pi_m$.

If all customers arrive at the same time then the unconditional (on the number of total customers) probability of any customer to be the $i$-th customer in service is
\[
p_i:=\sum_{n=i-1}^\infty\frac{q_n}{n+1}=\frac{1}{\E[N]}\sum_{n=i}^\infty \pi_n=\frac{\P(N\geq i)}{\E [N]}\ ,\quad i\geq 1\ .
\]
An individual customer's cost when all arrive at time $t<0$ is then
\begin{equation}\label{eq:cost_pi}
c_\pi(t;t)=\beta\sum_{i=i_t+1}^{\infty}p_{i+1}(t+i)-\gamma\sum_{i=0}^{i_t}p_{i+1}(t+i)\ ,
\end{equation}
where, as before, $i_t=\max\{i:\ t+i-1\leq 0\}$. A customer arriving momentarily before all others incurs a cost of $-\gamma t$. If $\P(N> i_t)>0$ then the only other feasible deviation is to arrive at $t=0$, which yields a cost of
\[
c_\pi(0;t)=\beta\sum_{i=i_t+1}^{\infty}q_i(t+i)\ .
\]
As $-\gamma t$ decreases with $t$ and $c_{\pi}(0;t)$ is increasing, $-\gamma t^e=c_\pi(0;t^e)$ has a unique solution. Furthermore, by continuity there is an interval $\tau^\pi=[\underline{t},\overline{t}]$ of pure symmetric equilibria, where $c_\pi(\underline{t};\underline{t})=c_\pi(0;\underline{t})$ and $c_\pi(\overline{t};\overline{t})=-\gamma\overline{t}$.

The socially optimal symmetric strategy is given by minimizing $c_\pi(t;t)$. Taking the derivative of \eqref{eq:cost_pi},
\[
\frac{d}{dt}c_\pi(t;t)=\beta\sum_{i=i_t+1}^{\infty}p_i-\gamma\sum_{i=0}^{i_t}p_i\ ,
\]
we see that $c_\pi(t;t)$ is piecewise affine and unimodal with the minimizing $t^*$, which is possibly not unique, satisfying
\[
\beta\sum_{i=i_{t^*}+1}^{\infty}p_i\leq\gamma\sum_{i=0}^{i_{t^*}}p_i\ ,
\]
and
\[
\beta\sum_{i=i_{t^*}+2}^{\infty}p_i>\gamma\sum_{i=0}^{i_{t^*}+1}p_i\ .
\]

If $N\sim$Poisson$(\lambda)$, a Poisson sized population with mean $\lambda$, then the socially optimal solution is often an equilibrium. Figure \ref{fig:equil_opt_poisson} illustrates the range of parameters, in terms of $\frac{\beta}{\beta+\gamma}$, for which the equivalence holds. When the population is small the social optimum is only an equilibrium when the tardiness penalty is relatively small. This contrasts to the deterministic case where it only holds for $\beta$ and $\gamma$ that are close (see Proposition \ref{prop:opt_equil} and Figure \ref{fig:equil_opt}). However, as $\lambda$ increases this holds for all parameter values, as was the case for the deterministic population. Interestingly, the minimal $\frac{\beta}{\beta+\gamma}$ such that the social optimum is an equilibrium behaves approximately as $\approx e^{-\lambda}$ (the solid blue line in Figure \ref{fig:equil_opt_poisson}).

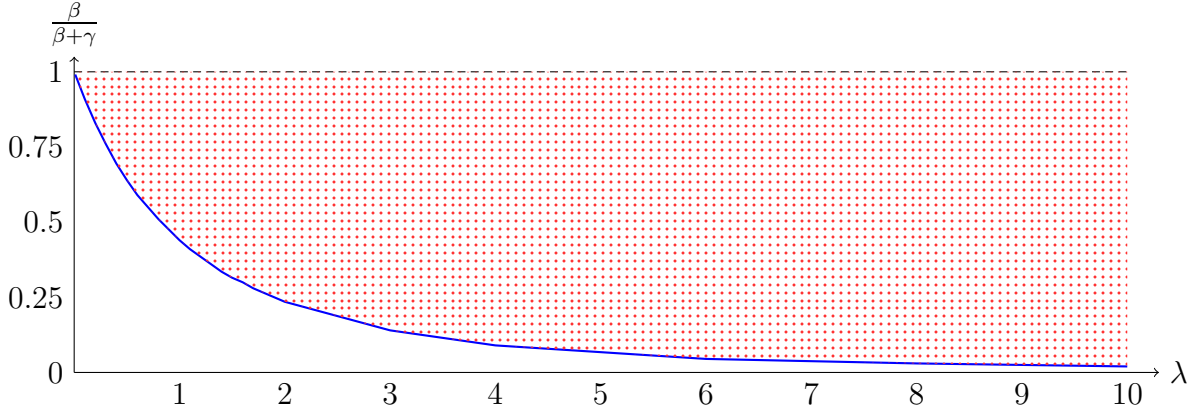
\begin{figure}[H]
\centering
\begin{tikzpicture}[xscale=1.4,yscale=4]
  \def\xmin{0}
  \def\xmax{10.3}
  \def\ymin{0}
  \def\ymax{1.05}
    \draw[->] (\xmin,\ymin) -- (\xmax,\ymin) node[right] {$\lambda$};
    \draw[->] (\xmin,\ymin) -- (\xmin,\ymax) node[above] {$\frac{\beta}{\beta+\gamma}$};
    \foreach \x in {1,2,3,4,5,6,7,8,9,10}
    \node at (\x,\ymin) [below] {\x};
    \foreach \y in {0,0.25,0.5,0.75,1}
    \node at (\xmin,\y) [left] {\y};
    
    \draw[densely dashed] (\xmin,1)--	(10,1);
    
     \draw[smooth,thick,blue] (0.01,0.991)--	(0.1,0.91)--	(0.2,0.83)--	(0.3,0.76)--	(0.4,0.695)--	(0.5,0.64)--	(0.6,0.59)--	(0.7,0.55)--	(0.8,0.51)--	(0.9,0.475)--	(1,0.44)--	(1.1,0.41)--	(1.2,0.385)--	(1.3,0.36)--	(1.4,0.335)--	(1.5,0.315)--	(1.6,0.3)--	(1.7,0.28)--	(1.8,0.265)--	(1.9,0.25)--	(2,0.235)--		(3,0.14)--		(4,0.09)--	(6,0.045)--	(8,0.03)--(10,0.02);
     
     \fill[pattern color=red!80, opacity=1,pattern=dots]  (0.01,0.991)--	(0.1,0.91)--	(0.2,0.83)--	(0.3,0.76)--	(0.4,0.695)--	(0.5,0.64)--	(0.6,0.59)--	(0.7,0.55)--	(0.8,0.51)--	(0.9,0.475)--	(1,0.44)--	(1.1,0.41)--	(1.2,0.385)--	(1.3,0.36)--	(1.4,0.335)--	(1.5,0.315)--	(1.6,0.3)--	(1.7,0.28)--	(1.8,0.265)--	(1.9,0.25)--	(2,0.235)--		(3,0.14)--		(4,0.09)--	(6,0.045)--	(8,0.03)--(10,0.02)-- (10,1)-- (0,1);
 
\end{tikzpicture}
\caption{Range of parameter values such that there is a socially optimal equilibrium. The population size is a Poisson random variable with mean $\lambda$.}\label{fig:equil_opt_poisson}
\end{figure}

\subsection{Heterogeneous customers and non-linear cost functions}\label{sec:heterogeneous}

Interestingly, it turns out that pure-symmetric equilibria exist even if the jobs have heterogeneous due dates and cost functions. However, the cost functions may not be ``too'' heterogeneous, in the sense that the due dates must be ``close" to each other, where the exact definition of ``close" depends on the population size and specific cost functions. 

\begin{assumption}\label{assump:ci}
For every customer $i=1,\ldots,n$ the cost function $c_i$ satisfies the following: 
\begin{enumerate}[label=(\alph*)]
\item There is a due date $d_i$ such that $c_i(d_i)=\min_s c_i(s)$.
\item $c_i(s)$ is continuous, unimodal, decreasing when $s<d_i$ and increasing when $s>d_i$. 
\end{enumerate}
\end{assumption}

We focus on cost functions satisfying Assumption \ref{assump:ci}, which is a reasonable and non-restrictive assumption for most scheduling and queueing scenarios. Without loss of generality we further assume that $c_i(d_i)=0$ for all $i=1,\ldots,n$. For example, the asymmetric absolute deviation cost function in \eqref{eq:cost}, and the quadratic cost function $c_i(s)=(s-d_i)^2$ both satisfy Assumption \ref{assump:ci}. As before, let $c_i(t;t')$ denote the expected cost for customer $i$ arriving at $t$ when all others arrive at $t'$.

\begin{lemma}\label{lemma:tau_i}
If Assumption \ref{assump:ci} holds then for every customer $i=1\ldots,n$ there exists a non-empty interval $\tau_i=[\underline{t}_i,\overline{t}_i]$ such that for any $t\in \tau_i$,
\[
c_i(t;t)\leq c_i(t)\ , 
\]
and
\[
c_i(t;t)\leq c_i(t+n-1)\ ,
\]
where the interval bounds are given by the solutions of
\begin{equation}\label{eq:ti_down}
c_i(\underline{t}_i+n-1)=\frac{1}{n-1}\sum_{j=0}^{n-2} c_i(\underline{t}_i+j)\ ,
\end{equation}
and
\begin{equation}\label{eq:ti_up}
 c_i(\overline{t}_i)=\frac{1}{n-1}\sum_{j=1}^{n-1} c_i(\overline{t}_i+j)\ .
\end{equation}
\end{lemma}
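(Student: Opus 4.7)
The plan is to translate the two inequalities in the lemma into averaged-sum conditions, locate a ``balance point'' where both are strictly satisfied, and then extend outward to identify the interval's endpoints.

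First I would compute $c_i(t;t)$: when all customers arrive together at time $t$, the uniform random ordering places customer $i$ in each position with equal probability, so his service start time is uniform on $\{t, t+1, \ldots, t+n-1\}$ and
\[
c_i(t;t) = \frac{1}{n}\sum_{j=0}^{n-1} c_i(t+j).
\]
The two conditions $c_i(t;t) \leq c_i(t)$ and $c_i(t;t) \leq c_i(t+n-1)$ then rearrange to $(n-1)c_i(t) \geq \sum_{j=1}^{n-1} c_i(t+j)$ and $(n-1)c_i(t+n-1) \geq \sum_{j=0}^{n-2} c_i(t+j)$ respectively, whose equality cases coincide with \eqref{eq:ti_up} and \eqref{eq:ti_down}.

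Next I would locate a balance point $\hat{t}$ at which $c_i(\hat{t}) = c_i(\hat{t}+n-1)$. Set $\delta(t) := c_i(t+n-1) - c_i(t)$. By Assumption \ref{assump:ci} this is continuous, satisfies $\delta(d_i-(n-1)) < 0$ and $\delta(d_i) > 0$ (since $c_i$ is positive away from $d_i$), and is strictly increasing on $(d_i-(n-1), d_i)$ because on that interval $c_i(t+n-1)$ is strictly increasing while $c_i(t)$ is strictly decreasing. Hence $\hat{t}$ exists uniquely, with $\delta(t) \geq 0$ for $t \geq \hat{t}$ and $\delta(t) \leq 0$ for $t \leq \hat{t}$. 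Strict unimodality further yields $c_i(\hat{t}+j) < c_i(\hat{t}) = c_i(\hat{t}+n-1)$ for every interior $j \in \{1, \ldots, n-2\}$, so averaging gives $c_i(\hat{t};\hat{t}) < c_i(\hat{t})$ and both conditions hold strictly at $\hat{t}$. I would then define $\overline{t}_i := \inf\{t \geq \hat{t} : c_i(t;t) > c_i(t)\}$ and $\underline{t}_i := \sup\{t \leq \hat{t} : c_i(t;t) > c_i(t+n-1)\}$; both are finite since for $|t|$ sufficiently large one of the two inequalities must fail by unimodality, and continuity promotes the defining strict inequalities to equalities at these endpoints, recovering \eqref{eq:ti_up} and \eqref{eq:ti_down}.

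The main obstacle is ensuring that the set where both inequalities hold simultaneously is actually the single connected interval $[\underline{t}_i, \overline{t}_i]$ rather than a disjoint union. I would resolve this via a ``splitting'' argument using the sign of $\delta$: on $[\hat{t}, \overline{t}_i]$ condition~1 holds by construction of $\overline{t}_i$, and then $\delta(t) \geq 0$ gives $c_i(t+n-1) \geq c_i(t) \geq c_i(t;t)$, so condition~2 is automatic; symmetrically, on $[\underline{t}_i, \hat{t}]$ condition~2 holds by construction and condition~1 follows from $\delta(t) \leq 0$. Thus exactly one condition is binding on each side of $\hat{t}$ while the other is implied, yielding the desired non-empty interval $\tau_i = [\underline{t}_i, \overline{t}_i]$.
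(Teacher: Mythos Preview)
Your proof is correct and takes a genuinely different route from the paper. The paper's argument proceeds by separately positing the existence of each endpoint: it invokes monotonicity of $c_i$ to assert that \eqref{eq:ti_up} and \eqref{eq:ti_down} each have a solution, and then shows $\underline{t}_i \leq \overline{t}_i$ by a direct algebraic manipulation of \eqref{eq:ti_up} (observing that the equality at $\overline{t}_i$ forces $c_i(\overline{t}_i+n-1)$ to exceed the average of the first $n-1$ terms). You instead anchor the construction at the balance point $\hat{t}$ where $c_i(\hat{t}) = c_i(\hat{t}+n-1)$, verify both inequalities there simultaneously, and then extend outward; the sign of $\delta(t)$ on either side of $\hat{t}$ gives connectedness for free. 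Your route is more explicit about why the interval is non-empty and why it is a single interval, at the cost of introducing the auxiliary object $\hat{t}$; the paper's route is shorter but leaves more to the reader regarding existence and uniqueness of the endpoint solutions. One small caveat: your claim that both conditions hold \emph{strictly} at $\hat{t}$ relies on there being interior indices $j \in \{1,\ldots,n-2\}$, i.e.\ $n \geq 3$; for $n=2$ the inequalities become equalities and $\tau_i = \{\hat{t}\}$ is a singleton, which is still non-empty and consistent with your inf/sup definitions, but deserves a word.
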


Lemma \ref{lemma:tau_i} characterizes the interval of symmetric arrival times $\tau_i$ such that customer $i$ does not want to deviate from the common arrival time. A symmetric equilibrium is then a time $t$ such that $t\in\tau_i$ for all $i=1,\ldots,n$. The interval of equilibria can be determined by computing all $\tau_i$ for all $i=1,\dots,n$. Alternatively, it can be computed by solving a pair of equations that we define in Proposition \ref{prop:heterogeneous_tau}. The proofs of the previous lemma and of the following proposition are in the appendix.

\begin{proposition}\label{prop:heterogeneous_tau}
The set of all pure symmetric Nash equilibria $\tau^e$ has two equivalent representations:
\begin{enumerate}[label=(\alph*)]
\item Let $\tau_i=[\underline{t}_i,\overline{t}_i]$ as defined in Lemma \ref{lemma:tau_i}, then $\tau^e=\bigcap_{i=1}^n\tau_i$.
\item $\tau^e=[\underline{t},\overline{t}]$ is given by solving 
\[
M(\underline{t})=m(\overline{t})=\frac{1}{n}\ ,
\]
where
\[
m(t)=\min_{i=1,\ldots,n}\left\lbrace\frac{c_i(t)}{\sum_{j=1}^n c_i(t+j-1)}\right\rbrace\ ,
\]
and
\[
M(t)=\min_{i=1,\ldots,n}\left\lbrace\frac{c_i(t+n-1)}{\sum_{j=1}^n c_i(t+j-1)}\right\rbrace\ .
\]
\end{enumerate}
\end{proposition}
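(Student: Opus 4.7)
The plan is to reduce part (a) to Lemma \ref{lemma:tau_i} by an immediate best-response argument, and then obtain part (b) by algebraically rewriting the two inequalities that characterize each $\tau_i$.

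For (a) I would argue as follows. An arrival time $t$ is a symmetric pure equilibrium iff no customer $i$ has a profitable unilateral deviation. Under Assumption \ref{assump:ci} (and the implicit assumption that all due dates sit within $[t,t+n-1]$ at any candidate equilibrium), the best deviations for customer $i$ collapse to two cases by unimodality of $c_i$: arriving an instant before $t$, which yields limit cost $c_i(t)$; and arriving at any time in $(t,t+n-1]$, which yields service at $t+n-1$ and cost $c_i(t+n-1)$ (later arrivals are weakly dominated). Thus the equilibrium condition is exactly $c_i(t;t)\leq c_i(t)$ and $c_i(t;t)\leq c_i(t+n-1)$ for every $i$, which by Lemma \ref{lemma:tau_i} is $t\in\tau_i$ for every $i$. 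Hence $\tau^e=\bigcap_{i=1}^n\tau_i$.

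For (b) I would start from $c_i(t;t)=\frac{1}{n}\sum_{j=1}^n c_i(t+j-1)$ and rearrange the two inequalities defining $\tau_i$ to the ratio form
\[
\frac{c_i(t)}{\sum_{j=1}^n c_i(t+j-1)}\geq\frac{1}{n}\qquad\text{and}\qquad\frac{c_i(t+n-1)}{\sum_{j=1}^n c_i(t+j-1)}\geq\frac{1}{n}.
\]
Imposing both for every $i$ is equivalent to $m(t)\geq 1/n$ and $M(t)\geq 1/n$. Since each $\tau_i=[\underline{t}_i,\overline{t}_i]$ is an interval, part (a) gives $\tau^e=[\max_i\underline{t}_i,\min_i\overline{t}_i]=:[\underline{t},\overline{t}]$. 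At $\underline{t}=\max_i\underline{t}_i$ the lower-boundary equation \eqref{eq:ti_down} is tight for the active customer $i^{\star}$, which rearranges to $c_{i^{\star}}(\underline{t}+n-1)/\sum_j c_{i^{\star}}(\underline{t}+j-1)=1/n$, while all other $i$ satisfy the ratio with $\geq 1/n$; hence $M(\underline{t})=1/n$. The symmetric argument at $\overline{t}$ via \eqref{eq:ti_up} yields $m(\overline{t})=1/n$.

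The main obstacle is justifying the sorting step in the last paragraph: namely, that the lower endpoint of each $\tau_i$ is witnessed by the $c_i(t+n-1)$ constraint and the upper endpoint by the $c_i(t)$ constraint, so that $\max_i\underline{t}_i$ correctly activates $M$ and $\min_i\overline{t}_i$ correctly activates $m$. This comes from the unimodal shape in Assumption \ref{assump:ci}: for $t$ sufficiently small the window $[t,t+n-1]$ lies to the left of $d_i$ so $c_i$ is decreasing on it, forcing $c_i(t)>c_i(t;t)>c_i(t+n-1)$; for $t$ sufficiently large the inequalities reverse. Hence the sign change of $c_i(t+n-1)-c_i(t;t)$ produces the unique lower bound $\underline{t}_i$ and the sign change of $c_i(t)-c_i(t;t)$ produces the unique upper bound $\overline{t}_i$, which is exactly the half-line structure needed for the ratio-minimum characterization to pick up the correct active constraint at each end of $\tau^e$.
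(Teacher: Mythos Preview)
Your proposal is correct and follows essentially the same approach as the paper: part (a) is reduced to Lemma~\ref{lemma:tau_i} via the best-response argument, and part (b) is obtained by rewriting the two equilibrium inequalities in ratio form and taking the minimum over $i$. If anything, your treatment is more careful than the paper's, which simply asserts the boundary characterization after deriving $m(t)\geq 1/n$ and $M(t)\geq 1/n$; your final paragraph on the half-line structure (why $M$ is active at $\underline{t}$ and $m$ at $\overline{t}$) fills in a step the paper leaves implicit.
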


The socially optimal arrival schedule will no longer be achieved by all arriving at the same time, because now the order of the jobs is important. The solution is given by the mathematical program
\[
\min_{s_1\in\mathbb{R},\phi\in\Pi}\sum_{i=1}^n c_i(s_1+\phi_i-1)\ ,
\]
where $\Pi$ is the space of permutations of $(1,\ldots,n)$. This formulation assumes that under the optimal schedule the server operates continuously, i.e., that the due dates are not too spread out with respect to the cost functions. Note that the social optimization problem is generally intractable and requires heuristic or approximation methods (e.g. \cite{GTW1988}).

\begin{example} Suppose that $n=3$ and that the players have a symmetric absolute deviation cost function ($\beta=\gamma=1$) and heterogeneous due dates, $(d_1,d_2,d_3)=(-0.25,0,0.25)$. Hence, the cost functions are 
\[
c_i(s)=\left\lbrace\begin{array}{ll}
|s+\frac{1}{4}|, & i=1\ , \\
|s|, & i=2\ , \\
|s-\frac{1}{4}|, & i=3\ .
\end{array}\right.
\]
To find the equilibrium use Proposition \ref{prop:heterogeneous_tau}b:
\[
M(t)=\min\left\lbrace\frac{|t+\frac{1}{4}|}{|t+\frac{1}{4}|+|t+\frac{5}{4}|+|t+\frac{9}{4}|},\frac{|t|}{|t|+|t+1|+|t+2|},\frac{|t-\frac{1}{4}|}{|t-\frac{1}{4}|+|t+\frac{3}{4}|+|t+\frac{7}{4}|}\right\rbrace\ ,
\]
and
\[
m(t)=\min\left\lbrace\frac{|t+\frac{9}{4}|}{|t+\frac{1}{4}|+|t+\frac{5}{4}|+|t+\frac{9}{4}|},\frac{|t+2|}{|t|+|t+1|+|t+2|},\frac{|t+\frac{7}{4}|}{|t-\frac{1}{4}|+|t+\frac{3}{4}|+|t+\frac{7}{4}|}\right\rbrace\ .
\]
It can be verified that $M(-1)=m(-1)=\frac{1}{3}$ and therefore all arriving at $t=-1$ is the unique symmetric pure-strategy equilibrium.
$\blacktriangle$
\end{example}

\begin{example} Suppose that $n=3$ and that due dates are heterogeneous: $(d_1,d_2,d_3)=(-0.5,0,0.5)$. We assume the following cost function, 
\[
c_i(s)=\left\lbrace\begin{array}{ll}
\beta\big(s+\frac{1}{2}\big)^2\mathbf{1}(s<-\frac{1}{2})+\big(s+\frac{1}{2}\big)^2\mathbf{1}(s>-\frac{1}{2}), & i=1\ , \\
s^2, & i=2\ , \\
\big(s-\frac{1}{2}\big)^2\mathbf{1}(s<\frac{1}{2})+\beta\big(s-\frac{1}{2}\big)^2\mathbf{1}(s>\frac{1}{2}), & i=3\ ,
\end{array}\right.
\] 
where $\beta>1$ is a parameter. The cost function can be interpreted as follows: Player 1 wants to start earlier but pays a larger penalty for early service. Player 2 wants to start at zero and has a symmetric deviation penalty. Player 3 wants to start later but has a larger penalty for starting late. 

Compute first $\tau_2=[\underline{t}_2,\overline{t}_2]$. By \eqref{eq:ti_up},
\[
\frac{1}{2}\left[(\overline{t}_2+1)^2+(\overline{t}_2+2)^2\right]=\overline{t}_2^2 \ \Rightarrow \ \overline{t}_2=-\frac{5}{6}\ ,
\]
and by \eqref{eq:ti_down},
\[
\frac{1}{2}\left[\underline{t}_2^2+(\underline{t}_2+1)^2\right]=(\underline{t}_2+2)^2 \ \Rightarrow \ \underline{t}_2=-\frac{7}{6}\ .
\]
We conclude that $\tau_2=\left[-\frac{7}{6},-\frac{5}{6}\right]$. Applying \eqref{eq:ti_up} and \eqref{eq:ti_down} yields $\overline{t}_1\geq\underline{t}_2$ if $\beta\geq 2.125$ and $\underline{t}_1\leq\overline{t}_2$ if $\beta\leq 46$. Therefore $\tau_1\cap\tau_2\neq\emptyset$ if and only if $\beta\in[2.125,46]$. Similarly for $i=3$ we find that $\overline{t}_3\geq\underline{t}_2$ if $\beta\leq 46$, $\underline{t}_3\leq\overline{t}_2$ if $\beta\geq 2.125$ and $\tau_3\cap\tau_2\neq\emptyset$ if and only if $\beta\in[2.125,46]$. 

If $\beta=5$ then $\overline{t}_1=\underline{t_3}=-1$, hence $\tau_1\cap\tau_3=-1$ and $t=-1$ is the unique equilibrium. If $\beta\in(5,46]$ then there is a non-empty (and non-singular) interval $\tau^e\subseteq\left[-\frac{7}{6},-\frac{5}{6}\right]$ of equilibrium arrival times. Lastly, if $\beta<5$ there is no pure symmetric equilibrium.
$\blacktriangle$
\end{example}

A special case of interest is considering homogeneous due dates with heterogeneous linear deviation penalties. Applying the equilibrium conditions of Proposition \ref{prop:heterogeneous_tau} yields the following characterization of the pure strategy equilibrium interval.

\begin{proposition}\label{prop:gamma_i_beta_i}
Suppose each customer $i=1,\ldots,n$ has an earliness penalty $\gamma_i$ and tardiness penalty $\beta_i$. If the deviation penalties are ordered as follows,
\[
\frac{\gamma_1}{\beta_1}\leq \frac{\gamma_2}{\beta_2}\leq\cdots\leq\frac{\gamma_n}{\beta_n}\ ,
\]
then the set of pure strategy equilibria (which may be empty) is
\[
\tau^e=[\underline{t}_n,\overline{t}_1]\ ,
\]
where $\underline{t}_n$ is given by \eqref{eq:ti_down} and $\overline{t}_1$ is given by \eqref{eq:ti_up}.
\end{proposition}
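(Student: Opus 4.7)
By part (a) of Proposition~\ref{prop:heterogeneous_tau}, the set of pure symmetric equilibria factors as
\[
\tau^e \;=\; \bigcap_{i=1}^{n}[\underline{t}_i,\overline{t}_i] \;=\; \bigl[\max_{i}\underline{t}_i,\; \min_{i}\overline{t}_i\bigr].
\]
The proposition is therefore equivalent to the comparative-statics claim that, for the linear asymmetric cost $c_i(s) = \gamma_i(-s)^+ + \beta_i s^+$, both endpoints $\underline{t}_i$ and $\overline{t}_i$ are non-decreasing functions of the single parameter $r_i := \gamma_i/\beta_i$. Granted this monotonicity, the assumed ordering $r_1 \leq \cdots \leq r_n$ immediately yields $\min_i \overline{t}_i = \overline{t}_1$ and $\max_i \underline{t}_i = \underline{t}_n$, which is the claim.

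Since $c_i$ is positively homogeneous of degree one in $(\gamma_i,\beta_i)$, dividing \eqref{eq:ti_up} and \eqref{eq:ti_down} by $\beta_i$ reduces them to equations in $r_i$ alone. Concretely, $\overline{t}_i$ is the unique root in $(-(n-1),0)$ of
\[
\Phi(t;r) \;:=\; r|t| \;-\; \tfrac{1}{n-1}\sum_{k=1}^{n-1}\bigl[r(-t-k)^{+} + (t+k)^{+}\bigr] \;=\; 0,
\]
and $\underline{t}_i$ is the root of
\[
\Psi(t;r) \;:=\; (t+n-1) \;-\; \tfrac{1}{n-1}\sum_{k=0}^{n-2}\bigl[r(-t-k)^{+} + (t+k)^{+}\bigr] \;=\; 0.
\]
For $\Phi$, the strict inequality $(-t-k)^{+} < |t|$ for every $k \geq 1$ and $t \in (-(n-1),0)$ gives $\partial_r \Phi > 0$, while on each linear piece a direct count of the $k$'s with $t+k$ negative versus positive yields $\partial_t \Phi < 0$. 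Hence $\Phi(\cdot;r)$ is continuous and strictly decreasing on $(-(n-1),0)$, and a standard comparative-statics argument (applied piecewise and glued across kinks by continuity) shows $d\overline{t}_i/dr_i > 0$. A symmetric calculation for $\Psi$ gives $\partial_r \Psi < 0$ (since $(-t)^{+}=-t>0$) and $\partial_t \Psi > 0$ (the constant $1$ dominates after the count), so $\underline{t}_i$ is also strictly increasing in $r_i$.

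The main technical obstacle is that $c_i$ is only piecewise linear, so $\Phi$ and $\Psi$ are piecewise affine in $t$ with slope changes at $t \in \{-1,\ldots,-(n-2)\}$; the classical implicit function theorem therefore does not apply at those kinks. This is handled by observing that on every linear piece the slope in $t$ retains the same sign throughout $(-(n-1),0)$, so $\Phi(\cdot;r)$ and $\Psi(\cdot;r)$ remain globally strictly monotonic in $t$ and continuous in $r$; the signs of $\partial_r\Phi$ and $\partial_r\Psi$ then force the corresponding roots to be monotonic in $r$ by a direct comparison of the two equations at parameters $r<r'$. Non-emptiness of $\tau^e$ is not claimed: if the heterogeneity of the ratios $\gamma_i/\beta_i$ is sufficiently large one may have $\underline{t}_n > \overline{t}_1$, in which case no pure symmetric equilibrium exists, consistent with the parenthetical remark in the statement.
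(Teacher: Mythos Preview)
Your proof is correct and follows essentially the same route as the paper: both divide by $\beta_i$ to reduce the endpoint equations \eqref{eq:ti_down}--\eqref{eq:ti_up} to equations in the single ratio $r_i=\gamma_i/\beta_i$, then establish that $\underline{t}_i$ and $\overline{t}_i$ are monotone in $r_i$, from which the intersection formula in Proposition~\ref{prop:heterogeneous_tau}(a) collapses to $[\underline{t}_n,\overline{t}_1]$. The only difference is presentational: the paper compares the two customers' defining equations directly at the root, whereas you package the same inequality as a sign check on $\partial_r\Phi,\partial_t\Phi$ (and $\Psi$) and invoke piecewise comparative statics.
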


\section{Conclusion}

We analyzed a decentralized scheduling game with each customer choosing when to join a single server queue with the aim of minimizing his deviation from a due date. We find that when the due dates are close and the penalty for deviation is the significant factor in the cost function, and not the waiting cost, then in equilibrium customers arrive simultaneously. This contrasts with the literature on queue arrival timing games, specifically the ?/M/1 model of \cite{GH1983}, and the subsequent works reviewed in Section \ref{sec:lit}, where the solution is typically a mixed strategy, that is, customers randomize their arrival times. This result is only due to the unimodality around the due date of the cost function, which results in a random draw for order between the customers benefiting everyone. As shown in our sensitivity analysis, as long as the waiting costs are sufficiently small, this holds even under the standard ?/M/1 assumptions of exponential service times, Poisson distributed population size and (small) waiting costs, .

This model can be extended in several directions. Considering multiple servers, complementary to the centralized analysis surveyed in \cite{LW2004}, seems straightforward, but may be technically challenging. Another interesting extension has multiple servers combined with routing, centralized or decentralized. An additional direction is to consider customers belonging to groups with common due dates that are not close between the groups, but are not too far so there is still interaction. The question is then will groups arrive together in equilibrium but not at the same time as other groups?

\section*{Appendix - Proofs}\label{sec:appendix}

\begin{proof}[Proof of Lemma \ref{lemma:n_equi}] \mbox{}
\begin{enumerate}
\item[(a)] If the server did not so operate, then all customers incur positive expected cost, but any single customer could achieve zero cost by arriving at zero.
\item[(b)] If there is no service for some interval before zero, then the expected cost at the end of the interval is less than the expected cost at the beginning of the interval, thus contradicting the equilibrium assumption. The reverse argument holds for such an interval after time zero.
\end{enumerate}
\end{proof}

\begin{proof}[Proof of Lemma \ref{lemma:no_asymmetric}]
We separately prove the lemma for the cases of pure and mixed strategies.
\begin{enumerate}
\item Suppose that $n_x \geq 1$ and $n_y \geq 1$ customers arrive at instants $x<y$, respectively. Without loss of generality assume no one arrives before $x$ or during $(x,y)$, and that $y-x<1$ (Lemma \ref{lemma:n_equi}b). The latest possible service time of customers arriving at $x$ is $x+n_x-1$, and this is also the earliest possible service time for customers arriving at $y$. If $x+n_x-1<0$ then arriving momentarily after $x$ is better than $x$, otherwise arriving momentarily before $y$ is better than arriving at $y$. Hence, such an arrival profile cannot be an equilibrium. 
\item A mixed strategy for customer $i$ is given by a randomization of arrival times and is represented by a cdf $F_i$. If $F_i$ is a best response to the strategy of all other customers then the cost at every time of its support is constant, and at least as high at any time not in the support. If customer $i$ arrives according to the mixed strategy $F_i$, and customer $j$ arrive according to $F_j$, then there is some $t<0$ such $F_i(t)>F_j(t)$ (w.l.o.g.) and $F_i(s)=F_j(s)$ $\forall s<t$. The cost of arriving at any time $s<t$ is equal for both, but different at $t$, so if $F_i(t-)=F_j(t-)>0$ then the equilibrium assumption is contradicted. If $F_i(t-)=F_j(t-)=0$ then the support of customer $i$ starts at time $t$ and the support of $j$ starts at some time $s>t$. Customer $j$ prefers arriving at $s$ to arriving just before $t$, but this implies the same for customer $i$, contradicting the equilibrium assumption.
\end{enumerate} 
\end{proof}

\begin{proof}[Proof of Lemma \ref{lemma:no_mixed}]
A symmetric mixed strategy is given by a distribution function $F$ such that all customers randomize their arrival times according to it. Due to Lemma \ref{lemma:n_equi}a there is a pair of values $t_a$ and $t_b$, where $-(n-1)<t_a<t_b<0$, such that all customers arrive between them with probability one, i.e. $F(t_a)=0$ and $F(t_b)=1$. If $n-1$ customers arrive according to $F$ then a single customer arriving at time $t$ should incur the same expected cost at any point in the support of $F$. Using the same arguments as in the proof of Lemma \ref{lemma:no_asymmetric} we can rule out distributions with ``holes", i.e., an interval $(s_1,s_2)$ such that $F(s_1)=F(s_2)$ where $t_a\leq s_1<s_2\leq t_b$. Moreover, atoms in $F$ are not possible either, specifically $t$ such that $\lim_{\delta\to 0}F(t-\delta)\neq F(t)$. The impossibility of atoms arises because if there is an atom at $t$ then arriving just before $t$ should have the same cost as arriving at $t$ and right after it; however arriving at $t$ dominates at least one of the other options because it includes a lottery for all the customers that arrive at $t$. We are left with continuous distributions. Let $F$ be a continuous distribution with support $t\in[t_a,t_b]$, such that $t_a<0$ by Lemma \ref{lemma:n_equi}. The probability of $1\leq k\leq n-1$ arrivals during an infinitesimally small interval $[t,t+\delta)$ is approximately ${n-1 \choose k}(f(t)\delta)^k$. Hence the probability of a single arrival is approximately $(n-1)f(t)\delta+o(\delta)$ and of $k>1$ arrivals is approximately $o(\delta)$. Therefore the change of cost by arriving $\delta$ after $t_a$ instead of $t_a$ is
\[
-\gamma\delta+(n-1)f(t_a)\delta\Big(\beta(t_a+1)\mathbf{1}_{\{t_a\geq -1\}}-\gamma(t_a+1)\mathbf{1}_{\{t_a<-1\}}\Big)+o(\delta)\ ,
\]
hence dividing by $\delta$ and taking $\delta \to 0$ we get that the marginal change in cost is
\[
-\gamma+(n-1)f(t_a)\Big(\beta(t_a+1)\mathbf{1}_{\{t_a\geq -1\}}-\gamma(t_a+1)\mathbf{1}_{\{t_a<-1\}}\Big)\ ,
\]
and thus $t_a\geq -1$, as otherwise the marginal change in cost would be negative and not zero. Similarly, because $t_b>t_a\geq -1$, arriving $\delta$ before $t_b$ will result in a change of approximately
\[
-\beta f(t_b-\delta)\delta+o(\delta)\ ,
\]
which is always negative. Therefore for some sufficiently small $\delta$ the cost at $t_b-\delta$ is smaller than at $t_b$; $F$ cannot be an equilibrium.
\end{proof}

\begin{proof}[Proof of Proposition \ref{prop:opt_equil}]
In Proposition \ref{prop:homogeneous_opt} we saw that the socially optimum solution need not be unique, in particular when $\frac{n\beta}{\beta+\gamma}$ is an integer. But in any case the strategy of all arriving together at $t^*=-\floor{\frac{n\beta}{\beta+\gamma}}$ is socially optimal. We shall verify the conditions for $t^*$ to be a symmetric arrival time and show that when this is not the case there is no other socially optimal arrival time.

\begin{enumerate}
\item If $\frac{\beta}{\beta+\gamma}<\frac{1}{n}$ then $t^*=-\floor{\frac{n\beta}{\beta+\gamma}}=0$. However, $t=0$ is not an equilibrium point as long as $\beta>0$ and $\gamma>0$, because any single customer can guarantee an arbitrarily small cost by arriving momentarily before the others. Similarly, if $\frac{\beta}{\beta+\gamma}>1-\frac{1}{n}=\frac{n-1}{n}$ then $t^*=-\floor{\frac{n\beta}{\beta+\gamma}}=-(n-1)$. Here again, $t^*$ cannot be an equilibrium because a customer can arrive during $(-(n-1),0]$ and incur no cost. Furthermore, in both cases $\frac{n\beta}{\beta+\gamma}$ is not an integer and $t^*$ is the unique social optimum, and as such no equilibrium is socially optimal. 
\item If $n=2$ then the condition is only met if $\beta=\gamma$, in which case we already established that the unique equilibrium $t^e=-\frac{1}{2}$ is also socially optimal.
\item Suppose $n>2$ and $\frac{\beta}{\beta+\gamma}\in\left[\frac{1}{n},1-\frac{1}{n}\right]$. Then the social optimum is given by an integer $-(n-2)\leq t^*\leq -1$. The total cost is $nc(t;t)$ and thus minimizing the total cost is equivalent to minimizing the average cost per customer. Let $k=-t^*$, and $m=n-1-k$. Then $k\geq 1$ is the number of early arrivals and $m\geq 1$ is the number of late arrivals. Note that as $t^*$ is an integer there is always a job that starts at exactly zero and has no penalty. If all arrive at $t^*$, then the average cost of early start times is
\[
\frac{\gamma}{k}\sum_{i=1}^k i=\frac{\gamma(k+1)}{2}\ ,
\]
and the average cost of late start times, including the one at zero with no penalty, is
\[
\frac{\beta}{m+1}\sum_{i=0}^m i=\frac{\gamma m}{2}\ .
\]
Hence, a customer's expected cost is
\begin{align*}
c(t^*;t^*) &= \frac{k}{n}\cdot\frac{\gamma(k+1)}{2}+\frac{m+1}{n}\cdot\frac{\beta m}{2} \\
&= \frac{k+1}{n}\cdot\frac{\gamma k}{2}+\frac{m}{n}\cdot\frac{\beta (m+1)}{2}\ .
\end{align*}
For $t^*$ to be an equilibrium the average would have to be smaller than both of the extremal penalties $\gamma k$ and $\beta m$. By addressing all possible cases, we next prove that this property indeed holds.
\begin{enumerate}[label=\roman*.] 
\item If $\gamma k=\beta m$ then clearly the average cost is smaller than both. Each of the extremal values is smaller than the respective earliness or lateness penalty: $\gamma k>\frac{\gamma(k+1)}{2}$ and $\beta m>\frac{\beta m}{2}$. %

\item If $\gamma k>\beta m$ then arriving earlier than $t^*$ yields a cost that surely exceeds the average. Furthermore, we argue that the extremal late penalty is at least as large as the earliness average penalty: $\beta m\geq \frac{k\gamma}{2}$. Otherwise, $k\gamma>\beta 2m\geq \beta(m+1)$, but this contradicts the optimality of $t^*$: setting $t=-t^*+1$ would reduce the cost by replacing the early penalty $k\gamma$ with a smaller later penalty $\beta(m+1)$. We conclude that if $\gamma k>\beta m$ then the average cost is smaller than both extremal penalties, and thus no deviation is beneficial. 
\item If $\gamma k<\beta m$ then the arriving later than $t^*$ is clearly not a good deviation. We further argue that the earliest penalty is also at least as large as the average of lateness penalties: $\gamma k\geq \frac{\beta m}{2}$. If the inequality holds for $k=1$ (and $m=n-1$) then it holds for all values of $k$. Indeed, if $k=1$ and $\gamma < \frac{\beta (n-1)}{2}$ then $2\gamma<\beta(n-1)$ and the cost can be reduced by setting $t^*=-2$, which results in replacing the highest late penalty $\beta(n-1)$ with a smaller early penalty $2\gamma$. Lastly, we conclude that $\gamma k\geq c(t^*;t^*)$, as required.
\end{enumerate} 
\end{enumerate}
\end{proof}

\begin{proof}[Proof of Proposition
\ref{prop:restricted}] \mbox{}
Recall that all jobs arriving at $t$ is an equilibrium if no deviation to a different time is beneficial for any single customer. Specifically, arriving a moment before all others does not yield lower cost then arriving at $t$ and joining the lottery for order:
\[
-\gamma t\geq c(t;t)\ ,
\] 
and the same is true for arriving later and being served after all other jobs:
\[
\beta(t+n-1)\geq c(t;t)\ .
\] 
The first condition is met for any $t\leq\overline{t}$, the second condition is met for any $t\geq\underline{t}$, and $\tau^e=[\underline{t},\overline{t}]$ is the interval of all possible equilibria in the unrestricted model. The arguments presented here can be assisted by considering the illustration of the cost terms in Figure \ref{fig:Ctt}.
\begin{enumerate}[label=(\alph*)] 
\item If $\tau^r=[a,b]\cap\tau^e\neq\emptyset$ then $t\in\tau^e$ for any $t\in\tau^r$, and thus by Proposition \ref{prop:homogeneous_equil} any such time is an equilibrium. 
\item If $a>\overline{t}$ then 
\[
\beta(a+n-1)> c(a;a)\ ,
\] 
and so arriving after $a$ is not desirable. This is a sufficient condition for equilibrium as arriving before $a$ is not allowed.
\item If $b<\underline{t}$ then 
\[
-\gamma b> c(b;b)\ ,
\] 
and so arriving a moment before $b$ is not desirable. This is a sufficient condition for equilibrium as arriving after $b$ is not allowed.
\end{enumerate}
\end{proof}

\begin{proof}[Proof of Lemma \ref{lemma:tau_i}]
Recall that when all arrive simultaneously, they are processed in uniform random order and that the cost for customer $i$ is
\[
c_i(t;t)=\sum_{j=0}^{n-1}\frac{1}{n} c_i(t+j)=\frac{1}{n}c_i(t)+\frac{1}{n}\sum_{j=1}^{n-1} c_i(t+j)\ ,
\]
hence,
\[
c_i(t;t)\leq c_i(t)\ \Leftrightarrow \ \frac{1}{n-1}\sum_{j=1}^{n-1} c_i(t+j) \leq c_i(t)\ .
\]
By part (b) of Assumption \ref{assump:ci} $c_i(t)$ is decreasing for all $t<d_i$; therefore by part (c) there exists some $\overline{t}_i$ such that
\begin{equation*}
\frac{1}{n-1}\sum_{j=1}^{n-1} c_i(\overline{t}_i+j) = c_i(\overline{t}_i)\ ,
\end{equation*}
and $\frac{1}{n-1}\sum_{j=1}^{n-1} c_i(t+j) < c_i(t)$ for all $t< \overline{t}_i$, where $\overline{t}_i$ is given by \eqref{eq:ti_up}. 

Similarly,
\[
c_i(t;t)\leq c_i(t+n-1)\ \Leftrightarrow \ \frac{1}{n-1}\sum_{j=0}^{n-2} c_i(t+j) \leq c_i(t+n-1)\ ,
\]
and as $c_i(t)$ is increasing for $t>d_i$ there exists a $\underline{t}_i$ such that
\begin{equation*}
\frac{1}{n-1}\sum_{j=0}^{n-2} c_i(\underline{t}_i+j) = c_i(\underline{t}_i+n-1)\ ,
\end{equation*}
and $\frac{1}{n-1}\sum_{j=0}^{n-2} c_i(t+j) < c_i(t+n-1)$ for all $t>\underline{t}_i$, where $\underline{t}_i$ is given by \eqref{eq:ti_down}.  Furthermore, \eqref{eq:ti_up} implies that
\[
c_i(\overline{t}_i+n-1)>\frac{1}{n-1}\sum_{j=0}^{n-2} c_i(\overline{t}_i+j)\ ,
\]
and we thus conclude that $\underline{t}_i\leq \overline{t}_i$, and therefore $\tau_i=[\underline{t}_i,\overline{t}_i]\neq\emptyset$.
\end{proof}

\begin{proof}[Proof of Proposition \ref{prop:heterogeneous_tau}] $\mbox{}$
\begin{enumerate}[label=(\alph*)] 
\item This is a direct result of Lemma \ref{lemma:tau_i}. Arriving at $t$ is a best response for customer $i$ if all others arrive $t$ for any $t\in\tau_i$, hence if this holds for all $i=1,\ldots,n$ then $t$ is a Nash equilibrium.
\item If $t$ is an equilibrium then
\[
c_i(t)\geq c_i(t;t)=\frac{1}{n}\sum_{j=1}^n c_i(t+j-1),\quad \forall i=1,\ldots,n\ ,
\]
which is equivalent to
\[
\frac{c_i(t)}{\sum_{j=1}^n c_i(t+j-1)}\geq \frac{1}{n},\quad \forall i=1,\ldots,n\ ,
\]
or
\[
\min_{i=1,\ldots,n}\left\lbrace \frac{c_i(t)}{\sum_{j=1}^n c_i(t+j-1)}\right\rbrace \geq \frac{1}{n}\ .
\]
Similarly, the second equilibrium condition,
\[
c_i(t+n-1)\geq c_i(t;t)=\frac{1}{n}\sum_{j=1}^n c_i(t+j-1),\quad \forall i=1,\ldots,n\ ,
\]
yields
\[
\min_{i=1,\ldots,n}\left\lbrace \frac{c_i(t+n-1)}{\sum_{j=1}^n c_i(t+j-1)}\right\rbrace \geq \frac{1}{n}\ .
\]
We conclude that the boundaries of the interval $\tau=[\underline{t},\overline{t}]$ are as stated in the proposition statement.
\end{enumerate}
\end{proof}

\begin{proof}[Proof of Proposition \ref{prop:gamma_i_beta_i}] \mbox{}
For any $-\frac{1}{n-1}<t<0$,
\[
c_i(t+n-1)=\beta_i(t+n-1)\ ,
\]
and
\[
\frac{1}{\beta_i}c_i(t+n-1)=t+n-1=\frac{1}{\beta_k}c_k(t+n-1),\ \forall k\neq i\ .
\]
Condition \eqref{eq:ti_down} is equivalent to 
\[
\frac{1}{\beta_i}c_i(\underline{t}_i+n-1)=\frac{1}{n-1}\left[\sum_{j=i_{\underline{t}_i}+1}^{n-2}(\underline{t}_i+j-1)-\frac{\gamma_i}{\beta_i} \sum_{j=1}^{i_{\underline{t}_i}}(\underline{t}_i+j-1)\right]\ .
\] 
If $\frac{\gamma_k}{\beta_k}\geq\frac{\gamma_i}{\beta_i}$ then
\[
\underline{t}_i+n-1=\frac{1}{\beta_i}c_i(\underline{t}_i+n-1)\geq\left[\sum_{j=i_{\underline{t}_i}+1}^{n-2}(\underline{t}_i+j-1)-\frac{\gamma_k}{\beta_k} \sum_{j=1}^{i_{\underline{t}_i}}(\underline{t}_i+j-1)\right]\ ,
\]
and thus
\[
\underline{t}_k+n-1=\frac{1}{\beta_k}c_i(\underline{t}_k+n-1)=\left[\sum_{j=i_{\underline{t}_k}+1}^{n-2}(\underline{t}_k+j-1)-\frac{\gamma_k}{\beta_k} \sum_{j=1}^{i_{\underline{t}_k}}(\underline{t}_k+j-1)\right]\ ,
\]
implies that $\underline{t}_k\leq \underline{t}_i$. As this argument can be repeated for any pair we have that
\[
\underline{t}_1\leq \underline{t}_2\leq\cdots\leq \underline{t}_n\ .
\]
The same type of argument applied to condition \ref{eq:ti_up} yields
\[
\overline{t}_1\leq \overline{t}_2\leq\cdots\leq \overline{t}_n\ .
\]
We conclude that
\[
\tau^e=\bigcap_{i=1}^n\tau_i=\bigcap_{i=1}^n[\underline{t}_i,\overline{t}_i]=[\underline{t}_n,\overline{t}_1]\ .
\]
\end{proof}

\vspace{-1cm}

\section*{Acknowledgments}
We thank Ran Snitkovsky for his advice and comments. This research was supported by the ISRAEL SCIENCE FOUNDATION (grant No. 355/15).

\bibliographystyle{abbrv}
\small{\bibliography{C:/Users/Liron/Dropbox/University/Research/Full_Bibliography/BigBib}}

\end{document}